\newtheorem{theorem}{Theorem}
\newaliascnt{lemma}{theorem}
\newaliascnt{proposition}{theorem}
\newtheorem{proposition}[proposition]{Proposition}
\newaliascnt{corollary}{theorem}
\newtheorem{corollary}[corollary]{Corollary}
\newaliascnt{example}{theorem}
\newtheorem{example}[example]{Example}
\newaliascnt{definition}{theorem}
\newtheorem{definition}[definition]{Definition}
\newaliascnt{claim}{theorem}
\newtheorem{claim}[claim]{Claim}
\newtheorem{problem}{Problem}
\newtheorem{construction}{Construction}
\newcommand{\run}[1]{\rho \! \left( #1 \right)}
\newcommand{\setX}{\mathcal{X}\!\left( \mathbf{x} \right)}
\newcommand{\maxcar}[2]{A_n^\mathbb{#1}\left(M, #2 \right)}
\begin{document}
\title{\textbf{Coding for Composite DNA to Correct Substitutions, Strand Losses, and Deletions}} 



\author{%
  \IEEEauthorblockN{\textbf{Frederik Walter}\IEEEauthorrefmark{1},
                    \textbf{Omer Sabary}\IEEEauthorrefmark{2},
                    \textbf{Antonia Wachter-Zeh}\IEEEauthorrefmark{1},
                    and \textbf{Eitan Yaakobi}\IEEEauthorrefmark{2}
                    }
  \IEEEauthorblockA{\IEEEauthorrefmark{1}%
                   Institute for Communication Engineering,  Technical University of Munich, Germany,
                    }
  \IEEEauthorblockA{\IEEEauthorrefmark{2}%
                    Faculty of Computer Science, Technion-Israel Institute of Technology, Haifa, Israel
                    \\
                    Emails: \{frederik.walter, antonia.wachter-zeh\}@tum.de
                    \{omersabary, yaakobi\}@cs.technion.ac.il}
                    \vspace*{-1cm}
\thanks{This project has received funding from the European Union’s Horizon Europe research and innovation programme under grant agreement No. 101115134}
}

\maketitle

\begin{abstract}
\emph{Composite DNA} is a recent method to increase the base alphabet size in DNA-based data storage.
This paper models synthesizing and sequencing of \emph{composite DNA} and introduces coding techniques to correct substitutions, losses of entire strands, and symbol deletion errors.  
Non-asymptotic upper bounds on the size of codes with $t$ occurrences of these error types are derived. Explicit constructions are presented which can achieve the bounds. 
\end{abstract}
\vspace{-0.2cm}
\section{Introduction}
Data storage on DNA molecules is a promising approach for archiving massive data \autocite{church_2012_NextGenerationDigital, goldman_2013_PracticalHighcapacity, erlich_2017_DNAFountain, organick_2018_RandomAccess}.
In classical DNA storage systems, binary information is encoded into sequences consisting of the four DNA bases
$\{A, C, G, T\}$. The encoded sequences are used to generate DNA molecules called \emph{strands} using the biochemical process of DNA synthesis. The synthesized strands are stored together in a tube. To retrieve the binary information, the strand must be read via \emph{DNA sequencing} and decoded back into the binary representation. 
The synthesis and the sequencing procedures are error-prone, and with the natural degradation of DNA they introduce errors to the DNA strands. To ensure data reliability, the errors have to be corrected by algorithms and error-correcting codes (ECCs). 

Recently, to allow higher potential information capacity, \autocite{anavy_2019_DataStorage,choi_2019_HighInformation} introduced the \emph{composite DNA} synthesis method. In this method, the multiple copies created by the standard DNA synthesis method
are utilized to create \emph{composite DNA symbols}, defined by a mixture of DNA bases and their ratios in a specific position of the strands. By defining different mixtures and ratios, the alphabet can be extended to have more than 4 symbols. More formally, a composite DNA symbol in a specific position can be abstracted as a quartet of probabilities $\{p_A, p_C, p_G, p_T\}$, in which $p_X$, $0 \le p_X \le 1$, is the fraction of the base $X \in \{A, C, G, T\}$ in the mixture and $p_A+p_C+ p_G+  p_T =1$. Thus, to identify composite symbols it is required to sequence multiple reads and then to estimate $p_A, p_C, p_G, p_T$ in each position. 

Due to the unique structure of the alphabet symbols in this method, base-level errors can easily change the observed mixture of bases and their ratio, therefore changing the observed composite symbols. Moreover, in this setup, the inherent redundancy of the synthesis and sequencing processes (i.e., multiple copies per strand) cannot be used directly to overcome errors by a retrieval pipeline \autocite{bar-lev_2021_DeepDNA,sabary_2024_ReconstructionAlgorithmsa} and thus it is required to design ECCs specifically targeting this method. An extension of the composite method, in which the symbols are composed from short DNA fragments (known as \emph{shortmers}) was suggested in \autocite{preuss_2024_EfficientDNAbased, yan_2023_ScalingLogical}. Other coding and information theory problems related to composite DNA were studied in \autocite{kobovich_2023_MDABInputDistribution, preuss_2024_SequencingCoverage,cohen_2024_OptimizingDecoding}. 

The most related work to this paper was recently studied by Zhang et al. in \autocite{zhang_2022_LimitedMagnitudeError}. The authors initiated the study of error-correcting codes for composite DNA. They considered an error model for composite symbols, which assumes that errors occur in at most $t$ symbols, and their magnitude is limited by $\ell$. They presented several code constructions as well as bounds for this model. Our work proposes a different way to model the composite synthesis method and studies additional error models. To simplify the model, the results are presented for the binary base alphabet instead of the 4-ary. The errors discussed in this paper include substitution errors, deletions, insertions, and the loss of entire strands. We suggest code constructions for these models and study upper bounds on the code cardinality. Due to space limitations, missing proofs can be found in the appendix. 
\vspace{-0.3cm}
\section{Definitions and Problem Statement}\label{sec:defs}
\vspace{-0.3cm}
Our approach to modeling composite DNA is described as follows. For simplicity, our model assumes the composite symbols are created from a binary alphabet (compared to a $4$-ary alphabet), allowing us to have only two probabilities $p_0$ and $p_1$. To further simplify the model, we assume that exactly $M$ strands are synthesized, resulting in probabilities that are multiples of $\frac{1}{M}$.
Thus, in our model, data generated by composite DNA can be described in two forms. 
The first is called \emph{composite vector representation}, a length-$n$ vector over the alphabet $\{0, 1, \ldots, M\}$. The second form is a \emph{matrix representation}, in which the data is described by an $M\times n$ binary matrix. 
The matrix representation, therefore, explicitly represents the synthesized strands (which are the rows of the matrix) while the vector representation corresponds to the composite mixture. 
For $1\le j \le n$, the number of ones in the $j$-th column of the matrix sums to the value in the $j$-th position of the composite vector. Clearly, there is a one-to-many mapping between a composite vector and its corresponding matrices.
In this work, we assume that all the strands that compose the matrix representation of a composite vector are classified and clustered perfectly\footnote{This can be done by defining standard (non-composite) \emph{indices} in each of the strands. Then, using the indices it is possible to identify strands that relate to the same composite vector.}.   
Furthermore, it should be noted, as mentioned in the introduction, that the synthesis process produces a \emph{set} of strands, that are described in our model as a \emph{matrix}. This is done to order the strands and simplify the analysis while the same results can be achieved if one decides to work with sets rather than matrices. 

For positive integers $k,n$, let $[k,n] = \{ k, k+1,  \dots , n \}$. 
For a composite vector $\mathbf{x} = (x_1, \dots , x_n) \in [0,M]^n$, we denote by $\setX$ the set of all possible matrix representations of $\mathbf{x}$, where each single matrix representation is denoted by  $X\in\setX$. More formally, 
\vspace{-0.3cm}
\begin{equation*}
    \setX = \Big\{ X \in \{0,1\}^{M \times n} : \sum_{i=1}^M X_{i,j} = x_j, \forall j \in [1,n] \Big\}.
\end{equation*}
\\[-2ex]
The cardinality $ |\setX | $ and average cardinality $\text{E}_{\mathbf{x}}(|\setX |)$ of the set $\setX$ are given by 
\vspace{-0.15cm}
\begin{equation*}
  |\setX | = \prod_{j=1}^{n} \binom{M}{x_j}, 
  \quad
  \text{E}_\mathbf{x}(|\setX |) = \frac{2^{Mn}}{(M+1)^n}.
\end{equation*}
\\[-2ex]
Thus, every permutation of rows of $X$ is also in $\setX$ and even each permutation within each of the columns.

When we refer to a \emph{strand}, we refer to the respective row in the matrix representation $X$.
For a matrix $X \in \setX$, $X_i$ denotes the $i$-th row of $X$, and $X_{i,j}$ denotes the $j$-th element in the $i$-th row. If we add ($+$) or subtract ($-$) two binary matrices $X, Y$, all operations are done element-wise modulo~$2$. 

In this paper, we assume that errors are introduced to the matrix representation $X$ of composite vectors $\textbf{x}$, and the goal is to retrieve $\textbf{x}$. 
We will analyze five error types in this paper, which are defined in the remainder of the section. 

When discussing channel properties independent of the error type, we use $\mathbb{E}$ as a placeholder.
The channel output, i.e., the noisy version of $X$, is denoted by $R$ and is given to the decoder. 
We define $\mathcal{R}^\mathbb{E}$ as the set of all possible channel matrices $R$ that can be obtained from any composite vector $\mathbf{x} \in [0,M]^n$ when errors of type $\mathbb{E}$ occur. 
For readability, we refer to $R$ as matrices and use the notation even if some elements do not contain a symbol in some error scenarios. 
In some setups, it is useful to transform $R$ to composite vector representation by summing the ones in each column of $R$. In this case, we denote the resulting vector by $\mathbf{r}$. 

\begin{definition}
The error ball of radius $t$ of type $\mathbb{E}$, denoted by $B_t^\mathbb{E} (\mathbf{x}) \subseteq \mathcal{R}^\mathbb{E}$, is the set of all matrices which can be obtained by introducing any $t$ errors of type $\mathbb{E}$ in any of the matrix representations $X$ of $\mathbf{x}$. 
\end{definition}

\begin{definition}\label{def:substitution}
    Let $\mathbf{x} \in [0,M]^n$ be a composite vector with a possible matrix representation $X \in \{ 0,1 \}^{M \times n}$. $\mathbb{S}$ denotes the substitution error type. It is said that \textbf{$t$ substitution errors} occurred in $X$, if there exist $t$ tuples $(k_\ell, h_\ell), k_\ell \in [1,M], h_\ell \in [1,n], \ell \in [1, t]$  
    such that
    \vspace{-0.2cm}
    \begin{equation*}
      R_{i,j} = 
      \begin{cases}
          \overline{X}_{i,j} & \text{if } i=k_\ell, j=h_\ell, \forall \ell \in [1,t]
          \\
          X_{i,j} & \text{otherwise},
      \end{cases}
    \end{equation*}
    with $R \!\in\! \{ 0,1 \}^{M \times n}$ and $\!\overline{X}_{i,j}\!$~is~the~binary~complement~of~$X_{i,j}$.
    \\
    A code $\mathcal{C}_t^\mathbb{S} \subseteq [0,M]^n$ is called \textbf{$t$-substitution-correcting code} if for  any $\mathbf{c}, \mathbf{c}' \in \mathcal{C}_t^\mathbb{S}$, we have $B_t^\mathbb{S} (\mathbf{c}) \cap B_t^\mathbb{S} (\mathbf{c}') = \emptyset$.
    The maximum cardinality of a $t$-substitution-correcting code is denoted by $\maxcar{S}{t}$.
\end{definition} 

\begin{problem}\label{pro:substitution}
    Find the value of $\maxcar{S}{t}$ and $t$-substitution-correcting codes of cardinality $\maxcar{S}{t}$.
\end{problem}

\begin{definition}\label{def:strandloss}
    Let $\mathbf{x} \in [0,M]^n$ be a composite vector with a possible matrix representation $X \in \{ 0,1 \}^{M \times n}$. $\mathbb{L}$ denotes the strand loss error type. It is said that \textbf{$t$ strand losses} occurred in $X$, if there exist $t \in [1,M]$ indices $k_\ell \in [1,M], \ell \in [1,t]$, such that  $R \in \{ 0,1 \}^{(M-t) \times n}$ is a submatrix of $X$, obtained by removing the rows indexed by $k_{\ell}$.
    A code $\mathcal{C}_t^\mathbb{L} \subseteq [0,M]^n$ is called a  \textbf{$t$-strand-loss-correcting code} if for any $\mathbf{c}, \mathbf{c}' \in \mathcal{C}_t^\mathbb{L}$, we have $B_t^\mathbb{L} (\mathbf{c}) \cap B_t^\mathbb{L} (\mathbf{c}') = \emptyset$.
    The maximum cardinality of a $t$-strand-loss-correcting code is denoted by $\maxcar{L}{t}$.
\end{definition}

\begin{problem}\label{pro:strandloss}
 Find the value of $\maxcar{L}{t}$ and $t$-strand-loss-correcting codes of cardinality $\maxcar{L}{t}$.
\end{problem}

\begin{definition}\label{def:deletion}
    Let $\mathbf{x} \in [0,M]^n$ be a composite vector with a possible matrix representation $X \in \{ 0,1 \}^{M \times n}$. $\mathbb{D}$ denotes the deletion error type. It is said that $t$ \textbf{deletions} occurred in $X$, if for $t$ tuples $(k_{\ell}, h_{\ell}), k_\ell \in [1,M], h_\ell \in [1,n], \ell \in [1, t]$, deleting the elements $X_{k_{\ell},h_{\ell}}$, $\forall\ell \in [1, t]$, from $X$ and shifting the respective row to the left results in $R$, which has $M$ rows and each row has length at most~$n$. 
    A code $\mathcal{C}_t^\mathbb{D} \subseteq [0,M]^n$ is called a \textbf{$t$-deletion-correcting code} if for any $\mathbf{c}, \mathbf{c}' \in \mathcal{C}_t^\mathbb{D}$, we have $B_t^\mathbb{D} (\mathbf{c}) \cap B_t^\mathbb{D} (\mathbf{c}') = \emptyset$. 
    The maximum cardinality of a $t$-deletion correcting code is denoted by $\maxcar{D}{t}$.
\end{definition}

\begin{problem}\label{pro:deletion}
Find the value of $\maxcar{D}{t}$ and deletion-correcting codes of cardinality $\maxcar{D}{t}$. 
\end{problem}

The definitions for insertion ($\mathbb{I}$) and indel ($\mathbb{ID}$) errors are analogue to \autoref{def:deletion}. For completeness, the specific definitions can be found in the appendix. 
\newcommand{\defInsertion}{
\begin{definition}\label{def:insertion}
    Let $\mathbf{x} \in [0,M]^n$ be a composite vector with a possible matrix representation $X \in \{ 0, 1\}^{M \times n}$. $\mathbb{I}$ denotes the insertion error type. 
    It is said that $t$ \textbf{insertions} occurred, if there exist $t$ tuples $(h_\ell, k_\ell, s_\ell), q \in [0,M], h_\ell \in [0,n], s_\ell \in \{0,1\}, \ell \in [1, \ldots , t]$ which denote the indices where the symbol $s_\ell$ is inserted in $X_{h_\ell,k_\ell}$. Consecutive positions are shifted to the right such that each row of $R$ has length at most $n+t$.
\end{definition}
}

\newcommand{\defIndel}{
\begin{definition}\label{def:indel}
    Let $\mathbf{x} \in [0,M]^n$ be a composite vector with a possible matrix representation $X \in \{ 0, 1\}^{M \times n}$. The $t$-insertion-deletion (indel) error type is denoted by $\mathbb{I}\mathbb{D}$. It is said that $t$ \textbf{indel errors} occurred if $t_i$ insertions occurred together with $t_d$ deletions occurred and $t = t_d + t_i$. 
    A code $\mathcal{C}_t^\mathbb{ID} \subseteq [0,M]^n$ is called a \textbf{$t$-indel-correcting code} if for any $\mathbf{c}, \mathbf{c}' \in \mathcal{C}_t^\mathbb{ID}$, we have $B_t^\mathbb{ID} (\mathbf{c}) \cap B_t^\mathbb{ID} (\mathbf{c}') = \emptyset$. 
    The maximum cardinality of a $t$-indel correcting code is denoted by $\maxcar{ID}{t}$.
\end{definition}
}

\begin{problem}\label{pro:indel}
Find the value of $\maxcar{ID}{t}$ and indel-correcting codes of cardinality $\maxcar{ID}{t}$. 
\end{problem}
\begin{example}\label{exa:errors}
    Let the composite vector be $\mathbf{x} = ( 3, 5, 3, 2)$. Then, one possible matrix representation $X$ and possible received matrices $R^\mathbb{E}$ with errors of type $\mathbb{E}$ are given below. The red symbols or lines indicate where the error occurred. \\[-2ex]
    {\footnotesize
    \begin{align*}
        X &= 
        \left[ \begin{array}{cccc}
          0 & 1 & 1 & 0 \\
          1 & 1 & 0 & 0 \\
          0 & 1 & 1 & 0 \\
          1 & 1 & 1 & 1 \\
          1 & 1 & 0 & 1 
        \end{array} \right] 
        \\
        R^\mathbb{S} &= 
        \left[ \begin{array}{cccc}
          0 & 1 & 1 & 0 \\
          1 & 1 & {\color{red} 1} & 0 \\
          0 & 1 & 1 & 0 \\
          1 & 1 & 1 & 1 \\
          1 & 1 & 0 & 1 
        \end{array} \right]
        \quad
        R^\mathbb{L} = 
        \left[ \begin{array}{cccc}
          0 & 1 & 1 & 0 \\
          1 & 1 & 1 & 0 \\
          \arrayrulecolor{red} \hline
          1 & 1 & 1 & 1 \\
          1 & 1 & 0 & 1 
        \end{array} \right]
        \\
        R^\mathbb{D} &= 
        \left[ \begin{array}{cccc}
          0 & 1 & 1 & 0 \\
          1 & \color{red} \vline \color{black} 0 & 0 & \\
          0 & 1 & 1 & 0 \\
          1 & 1 & 1 & 1 \\
          1 & 1 & 0 & 1 
        \end{array} \right]
        \quad
        R^\mathbb{I} = 
        \left[ \begin{array}{ccccc}
          0 & 1 & 1 & 0 & \\
          1 & 1 & 0 & 0 & \\
          0 & 1 & {\color{red} 0 } & 1 & 0 \\
          1 & 1 & 1 & 1 & \\
          1 & 1 & 0 & 1 & 
        \end{array} \right]
    \end{align*}
    }
\end{example}

\section{Substitution Errors} \label{sec:substitution}
To analyze $t$-substitution-correcting codes, we define codes in the $L_1$-metric (also known as the Manhattan distance) in \autoref{def:l1-code} and show their equivalence to $t$-substitution-correcting codes in \autoref{cla:l1-equiv}. 

\begin{definition}\label{def:l1-code}
     For two vectors $\mathbf{x}, \mathbf{y} \in [0,M]^n$, the $L_1$-distance $d_1(\mathbf{x}, \mathbf{y})$ is defined as 
     $ d_1(\mathbf{x}, \mathbf{y})= \sum_{j=1}^{n} \left| x_j - y_j \right|$.
     A code $\mathcal{C} \subseteq [0,M]^n$ has minimum $L_1$-distance $d$, if for all $\mathbf{x}, \mathbf{y} \in \mathcal{C}$ we have 
    $d_1(\mathbf{x}, \mathbf{y}) \geq d $.
    The maximum cardinality of a code of length $n$ over an alphabet of size $q$ with minimum $L_1$-distance $d$ is denoted as $A_n^{L_1} (q, d)$.
\end{definition}

\begin{claim}\label{cla:l1-equiv}
    For all vectors $\mathbf{x}, \mathbf{y} \in [0,M]^n$ we have the equivalence
    \begin{align*}
        d_1(\mathbf{x}, \mathbf{y})  \geq 2t+1  \iff B_t^\mathbb{S}(\mathbf{x}) \cap B_t^\mathbb{S}(\mathbf{y}) = \emptyset.
    \end{align*}
\end{claim}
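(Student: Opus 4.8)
The plan is to pass through the composite-vector level: I would first characterize the substitution ball $B_t^\mathbb{S}(\mathbf{x})$ by the $L_1$-distance, and then conclude with a triangle-inequality / interpolation argument. For a matrix $R\in\{0,1\}^{M\times n}$ let $\mathbf{r}$ denote its column-sum vector. The \emph{key lemma} I would prove is that $R\in B_t^\mathbb{S}(\mathbf{x})$ if and only if $d_1(\mathbf{x},\mathbf{r})\le t$, where $B_t^\mathbb{S}(\mathbf{x})$ is the set of matrices obtainable from some $X\in\setX$ by at most $t$ substitutions. One direction is immediate: a single bit flip changes one column sum by at most $1$, so if $R$ comes from some $X\in\setX$ by at most $t$ substitutions, then in each column $j$ the number of flipped entries is at least $|x_j-r_j|$, and summing over $j$ gives $d_1(\mathbf{x},\mathbf{r})\le t$. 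For the converse I would build a witness $X\in\setX$ from $R$ column by column: in column $j$, if $x_j\ge r_j$ flip $x_j-r_j$ of its $M-r_j$ zeros to ones, and if $x_j<r_j$ flip $r_j-x_j$ of its $r_j$ ones to zeros. This is always possible since $M-r_j\ge x_j-r_j$ and $r_j\ge r_j-x_j$; the resulting matrix has column sums $\mathbf{x}$, hence lies in $\setX$, and differs from $R$ in exactly $\sum_j|x_j-r_j|=d_1(\mathbf{x},\mathbf{r})\le t$ positions, so undoing these flips on $X$ exhibits $R$ in $B_t^\mathbb{S}(\mathbf{x})$.

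Using the lemma, $B_t^\mathbb{S}(\mathbf{x})\cap B_t^\mathbb{S}(\mathbf{y})\ne\emptyset$ is equivalent to the existence of a vector $\mathbf{r}\in[0,M]^n$ with $d_1(\mathbf{x},\mathbf{r})\le t$ and $d_1(\mathbf{y},\mathbf{r})\le t$: a matrix in the intersection supplies such an $\mathbf{r}$, and conversely any binary matrix with column sums $\mathbf{r}$ (one exists because $0\le r_j\le M$) lies in both balls. By the triangle inequality for $d_1$, the existence of such an $\mathbf{r}$ forces $d_1(\mathbf{x},\mathbf{y})\le 2t$. Conversely, assume $D:=d_1(\mathbf{x},\mathbf{y})\le 2t$. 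Fix an integer $a$ with $\max(0,D-t)\le a\le\min(D,t)$ (such an $a$ exists precisely because $D\le 2t$) and pick integers $0\le e_j\le|x_j-y_j|$ with $\sum_j e_j=a$ (possible since $\sum_j|x_j-y_j|=D\ge a$); let $r_j$ be the integer at distance $e_j$ from $x_j$ toward $y_j$. Then $r_j$ lies between $x_j$ and $y_j$, so $r_j\in[0,M]$, while $d_1(\mathbf{x},\mathbf{r})=a\le t$ and $d_1(\mathbf{y},\mathbf{r})=D-a\le t$, so the two balls meet.

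Chaining the two equivalences yields $B_t^\mathbb{S}(\mathbf{x})\cap B_t^\mathbb{S}(\mathbf{y})=\emptyset\iff d_1(\mathbf{x},\mathbf{y})\ge 2t+1$, which is the statement. I expect the converse half of the key lemma to be the only genuinely delicate point: one has to check that the greedy column-wise correction never runs out of bits of the value it needs, which is exactly what the inequalities $M-r_j\ge x_j-r_j$ and $r_j\ge r_j-x_j$ guarantee. The interpolation step and the triangle inequality are routine bookkeeping.
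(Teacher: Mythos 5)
Your proof is correct, and it is organized differently from the paper's. The paper argues directly at the matrix level in both directions: for $(\Rightarrow)$ it writes $R=X+E=Y+E'$ with error matrices of weight at most $t$ and bounds $d_1(\mathbf{x},\mathbf{y})=\sum_j\bigl|\sum_i E'_{i,j}-E_{i,j}\bigr|\le 2t$; for $(\Leftarrow)$ it chooses column-sorted representations $X,Y$ (zeros on top, ones on bottom), observes that they then differ in exactly $d_1(\mathbf{x},\mathbf{y})$ positions, and splits those positions into two error patterns of weight at most $t$ to manufacture a common $R$. You instead factor the argument through the clean characterization $B_t^{\mathbb{S}}(\mathbf{x})=\{R: d_1(\mathbf{x},\mathbf{r})\le t\}$, which reduces the whole claim to vector-level arithmetic (triangle inequality plus the midpoint interpolation choosing $a$ with $\max(0,D-t)\le a\le\min(D,t)$). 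The underlying combinatorial content is the same --- a substitution moves one column sum by $\pm 1$, and any prescribed column-sum change can be realized with the minimal number of flips, which is exactly what the paper's sorted representation and your greedy column-wise correction both establish --- but your key lemma is a stronger, reusable statement: it says the substitution ball is determined entirely by the column-sum vector and equals the $L_1$-ball of radius $t$, which makes the equivalence to $L_1$-metric codes essentially definitional and would also immediately yield the ball size. The paper's version is slightly shorter since it never needs the interpolation step, constructing the common received matrix in one stroke. All the steps you flag as delicate (the feasibility inequalities $M-r_j\ge x_j-r_j$ and $r_j\ge r_j-x_j$, the existence of the integer $a$) check out.
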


\newcommand{\proClaSubstitution}{
\begin{proof}
    $(\Rightarrow )$: Assume that $\mathbf{x}, \mathbf{y} \in [0,M]^n $ are two words with $d_1(\mathbf{x}, \mathbf{y}) \geq 2t +1$ and possible matrix representations $X, Y$ such that $\exists R \in B_t^\mathbb{S}(\mathbf{x}) \cap B_t^\mathbb{S}(\mathbf{y}) $. 
    Therefore, we introduce the binary error matrices $E, E'$ such that $R = X + E$ and $R = Y + E'$. $E$ and $E'$ can have at most $t$ nonzero entries as each entry represents a substitution.
    Then we obtain
    \begin{align*}
        d_1\left( \mathbf{x}, \mathbf{y} \right) 
        &= \sum_{j=1}^n \left| x_j - y_j' \right| 
        = \sum_{j=1}^n \left| \sum_{i = 1}^M X_{i,j} - Y_{i,j} \right| 
        \\
        &= \sum_{j=1}^n \left| \sum_{i = 1}^M R_{i,j} - E_{i,j} - R_{i,j} + E_{i,j}' \right| 
        \\
        &= \sum_{j=1}^n \left| \sum_{i = 1}^M E'_{i,j} - E_{i,j} \right| 
        \\
        & \leq \sum_{j=1}^n \sum_{i = 1}^M \left| E_{i,j} \right| + \sum_{j=1}^n  \sum_{i = 1}^M \left| E_{i,j}' \right| 
        \leq 2t 
    \end{align*}
    This contradicts the initial assumption that $d_1(\mathbf{x}, \mathbf{y}) \geq 2t +1$ and therefore, $B_t^\mathbb{S}(\mathbf{x}) \cap B_t^\mathbb{S}(\mathbf{y}) = \emptyset $. 
    \\
    $(\Leftarrow) $: 
    Now, assume that $\mathbf{x}, \mathbf{y} \in [0,M]^n $ are chosen such that $B_t^\mathbb{S}(\mathbf{x}) \cap B_t^\mathbb{S}(\mathbf{y}) = \emptyset$, where $\mathbf{x}, \mathbf{y}$ have distance $d_1(\mathbf{x}, \mathbf{y}) < 2t +1$. The matrices $X, Y$ are matrix representations of $\mathbf{x}, \mathbf{y}$, structured such that in each column, all zeros are on the top and all ones are on the bottom. Then, for each column $j$, there exist two integers $e_j^{(0)} \leq e_j^{(1)} \in [1,M]$ such that 
    \begin{align*}
        X_{i,j} = Y_{i,j} & \quad \text{if } j<e_j^{(0)} \text{ or } j \geq e_j^{(1)}
        \\
        X_{i,j} \neq Y_{i,j} & \quad \text{otherwise}
    \end{align*}
    With this structure of $X, Y$, we get that $e_j^{(1)} - e_j^{(0)} = |x_j - y_j'| $ and $X, Y$ can only differ in $\sum_{j=1}^n \left(e_j^{(1)} - e_j^{(0)}\right) = d_1(\mathbf{x}, \mathbf{y}) < 2t + 1$ positions. Next, define a matrix $E$ such that it has ones in $t$ positions, where $X, Y$ differ and $E'$ in the remaining up to $t$ positions such that we get $X + E + E' = Y$. Finally, we define a matrix $R=X+E=Y+E'$ and since $E, E' $ introduce at most $t$ substitutions we get that $B_t^\mathbb{S}(\mathbf{c}) \cap B_t^\mathbb{S}(\mathbf{c}') \neq \emptyset $ which contradicts the assumption.  
\end{proof}
}

As a result of \autoref{cla:l1-equiv}, a code $\mathcal{C}_t^\mathbb{S}$ is a $t$-substitution-correcting code if and only if its minimum $L_1$-distance is at least $2t+1$. 
Hence, the following equality holds: 
\begin{equation*}
    \maxcar{S}{t} = A_n^{L_1} (M+1, 2t +1).
\end{equation*}

To the best of the authors' knowledge, codes in this metric are only little studied. A variation of the $L_1$-distance which considers whether the errors increase or decrease the levels was studied in \autocite{tallini_2011_L1distanceError, tallini_2012_SymmetricL1, chen_2021_OptimalCodes}. In \autocite{etzion_2013_CodingLee}, the $L_1$-distance was studied but over the infinite alphabet of all integers and anticodes over this metric were studied in \autocite[Chapter 2]{ahlswede_2008_LecturesAdvances}. Thus, we are not aware of explicit results on the value of $A_n^{L_1} (M+1, 2t +1)$, besides some trivial and special cases. 

\section{Loss of Strands}\label{sec:strandloss}

This section discusses error events in which $t$ of the strands are lost. 
Similar to \autoref{sec:substitution}, we first prove the equivalence to codes in the $L_\infty$-metric.

\begin{definition}\label{def:linfty-vector}
     For two vectors $\mathbf{x}, \mathbf{y} \in [0,M]^n$, the $L_\infty$-distance $d_\infty(\mathbf{x}, \mathbf{y})$ is defined as 
     $d_\infty(\mathbf{x}, \mathbf{y}) = \max_{j \in [1,n]} \left| x_j - y_j \right|$.
     A code $\mathcal{C} \subseteq [0,M]^n$ has minimum $L_\infty$-distance $d$, if for all $\mathbf{x}, \mathbf{y} \in \mathcal{C}$ we have 
    $ d_\infty(\mathbf{x}, \mathbf{y}) \geq d $.
    The maximum cardinality of a code of length $n$ over an alphabet of size $q$ with minimum $L_\infty$-distance $d$ is denoted as $A_n^{L_\infty} (q, d)$.
\end{definition}

\begin{claim}\label{cla:linfty}
    For any two vectors $\mathbf{x}, \mathbf{y} \in [0,M]^n$ we have that
    \begin{align*}
        d_\infty(\mathbf{x}, \mathbf{y})  \geq t+1  \iff B_t^\mathbb{L}(\mathbf{x}) \cap B_t^\mathbb{L}(\mathbf{y}) = \emptyset.
    \end{align*}
\end{claim}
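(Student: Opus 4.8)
The plan is to mirror the structure of the proof of Claim~\ref{cla:l1-equiv}, establishing both directions by contraposition, but now working with the $L_\infty$-distance and the strand-loss error ball. The key observation is that after $t$ strand losses, the received matrix $R$ has $M-t$ rows, and summing the ones in each column of $R$ gives a vector $\mathbf{r}$ with $r_j \in [x_j - t, x_j]$ (each lost row contributes either $0$ or $1$ to column $j$, so dropping $t$ rows decreases the column sum by between $0$ and $t$). In fact, more is true: for a \emph{given} choice of lost row-indices, the value $x_j - r_j$ equals the number of those rows that carried a one in column $j$, and by choosing which columns of $X$ have their ones placed in which rows, one can realize any vector $\mathbf{r}$ with $x_j - t \le r_j \le x_j$ and $r_j \ge 0$.

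First I would prove $(\Leftarrow)$ by contraposition: assume $d_\infty(\mathbf{x},\mathbf{y}) \le t$, and produce a common element of $B_t^\mathbb{L}(\mathbf{x}) \cap B_t^\mathbb{L}(\mathbf{y})$. Since $|x_j - y_j| \le t$ for every $j$, the intervals $[\max(0,x_j-t), x_j]$ and $[\max(0,y_j-t), y_j]$ overlap; pick $r_j$ in the intersection for each $j$ (for instance $r_j = \min(x_j, y_j)$ works when the larger of the two is at most $t$ more than it, which is exactly the hypothesis, but one must also check $r_j \ge \max(x_j,y_j) - t$, which holds since $|x_j-y_j|\le t$). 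Then build a matrix $R \in \{0,1\}^{(M-t)\times n}$ with column sums $\mathbf{r}$; it remains to exhibit matrix representations $X \supseteq R$ and $Y \supseteq R$ obtained by \emph{adding back} $t$ rows. Concretely, from $R$ we can append $t$ rows chosen so that column $j$ gains exactly $x_j - r_j$ ones (possible since $0 \le x_j - r_j \le t$), yielding some $X \in \setX$; similarly append $t$ rows to get $Y \in \setX[\mathbf{y}]$. Here one should be slightly careful that the \emph{same} $R$ is a submatrix of both $X$ and $Y$ — but since the appended rows are disjoint from $R$'s rows in each construction, $R$ is literally a row-submatrix of each, so $R \in B_t^\mathbb{L}(\mathbf{x}) \cap B_t^\mathbb{L}(\mathbf{y})$, a contradiction.

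For $(\Rightarrow)$, again by contraposition: assume some $R \in B_t^\mathbb{L}(\mathbf{x}) \cap B_t^\mathbb{L}(\mathbf{y})$, so $R$ (with $M-t$ rows) is a row-submatrix of some $X \in \setX$ and of some $Y \in \setX[\mathbf{y}]$. Let $\mathbf{r}$ be the column-sum vector of $R$. Summing over the $M-t$ rows of $R$ in column $j$ gives $r_j$, while the $t$ deleted rows of $X$ contribute $x_j - r_j \in [0,t]$ ones and those of $Y$ contribute $y_j - r_j \in [0,t]$ ones; hence $|x_j - y_j| = |(x_j - r_j) - (y_j - r_j)| \le t$ for every $j$, so $d_\infty(\mathbf{x},\mathbf{y}) \le t$, contradicting $d_\infty(\mathbf{x},\mathbf{y}) \ge t+1$. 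This direction is essentially immediate once the bound $0 \le x_j - r_j \le t$ is in hand.

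The main obstacle I anticipate is the explicit construction in the $(\Leftarrow)$ direction: one must verify that the chosen $r_j$ simultaneously satisfies $0 \le r_j$, $r_j \le x_j$, $r_j \le y_j$, $x_j - r_j \le t$, and $y_j - r_j \le t$, and then that a common received matrix $R$ can actually be completed to valid matrix representations on both sides. Choosing $r_j = \min(x_j, y_j)$ handles $r_j \le x_j$, $r_j \le y_j$, and (using $|x_j - y_j| \le t$) both $x_j - r_j \le t$ and $y_j - r_j \le t$; nonnegativity is automatic. Filling in the $t$ extra rows is then just a matter of placing the right number of ones in each column, with no interaction between columns, so no genuine difficulty remains — but it is the step that needs to be written out carefully to make the argument airtight.
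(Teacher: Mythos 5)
Your overall strategy is sound and is essentially the same as the paper's: the $(\Rightarrow)$ direction via column sums of the surviving $M-t$ rows is exactly the paper's argument, and the $(\Leftarrow)$ direction by exhibiting a common row-submatrix is the same idea (the paper phrases it by choosing $X$ and $Y$ so that each column agrees in at least $M-|x_j-y_j|\ge M-t$ positions; you phrase it by first fixing the column-sum vector $\mathbf{r}$ of $R$ and then completing $R$ to matrix representations of $\mathbf{x}$ and of $\mathbf{y}$ by appending $t$ rows on each side).

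There is, however, one concrete gap in your $(\Leftarrow)$ construction: you never impose the constraint $r_j \le M-t$, which is necessary for a column of the $(M-t)\times n$ matrix $R$ to contain $r_j$ ones, and your proposed choice $r_j=\min(x_j,y_j)$ violates it whenever $\min(x_j,y_j)>M-t$ (e.g.\ $M=5$, $t=2$, $x_j=5$, $y_j=4$ gives $r_j=4>M-t=3$). The checklist of conditions you say must be verified ($0\le r_j$, $r_j\le x_j$, $r_j\le y_j$, $x_j-r_j\le t$, $y_j-r_j\le t$) omits exactly this one. The fix is immediate: take $r_j=\min(x_j,y_j,M-t)$. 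Then $0\le r_j\le M-t$, $r_j\le\min(x_j,y_j)$, and $x_j-r_j=\max\bigl(0,\ x_j-y_j,\ x_j-(M-t)\bigr)\le t$, since $x_j-y_j\le|x_j-y_j|\le t$ and $x_j-(M-t)\le t$ because $x_j\le M$; the bound $y_j-r_j\le t$ follows symmetrically. With that correction, the completion of $R$ by $t$ extra rows carrying $x_j-r_j\in[0,t]$ (respectively $y_j-r_j\in[0,t]$) ones per column goes through column by column exactly as you describe, and the proof is complete.
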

\newcommand{\proClaStrandLoss}{
\begin{proof}
    $(\Rightarrow )$: Let $\mathbf{x}, \mathbf{y} \in [0,M]^n $ be two words with $d_\infty(\mathbf{x}, \mathbf{y}) \geq t +1$ and possible matrix representations $X, Y$. Assume to the contrary that $\exists R \in B_t^\mathbb{L}(\mathbf{x}) \cap B_t^\mathbb{L}(\mathbf{y}) $. 
    Therefore, there exist binary matrices $E, E' \in \{0,1 \}^{t \times n}$
    \begin{align*}
        X &= 
        \begin{bmatrix}
            E \\ R
        \end{bmatrix}
        \quad \text{and} \quad
        Y = 
        \begin{bmatrix}
            E' \\ R
        \end{bmatrix}
        \\
        \text{such that } x_j &= \sum_{i = 1}^M X_{i,j} = \sum_{i = 1}^t E_{i,j} + \sum_{i = 1}^{M-t} R_{i,j} 
        \\
        \text{and } y_j &= \sum_{i = 1}^M Y_{i,j} = \sum_{i = 1}^t E_{i,j}' + \sum_{i = 1}^{M-t} R_{i,j} 
    \end{align*}
    up to some permutation in $X, Y$. 
    Then we get 
    \begin{align*}
        d_\infty&(\mathbf{x}, \mathbf{y}) 
        = \max_{j \in [1,n]} \left| x_j - y_j \right|
        \\
        &= \max_{j \in [1,n]} \left| \sum_{i = 1}^t E_{i,j} + \sum_{i = 1}^{M-t} R_{i,j} - \sum_{i = 1}^t E_{i,j}' + \sum_{i = 1}^{M-t} R_{i,j} \right|
        \\
        &= \max_{j \in [1,n]} \left| \sum_{i = 1}^t E_{i,j} - \sum_{i = 1}^t E_{i,j}' \right| \leq t
    \end{align*}
    This contradicts the initial assumption that $d_\infty(\mathbf{x}, \mathbf{y}) \geq t +1$ and therefore, $B_t^\mathbb{L}(\mathbf{x}) \cap B_t^\mathbb{L}(\mathbf{y}) = \emptyset $. 
    \\
    $(\Leftarrow) $: 
    Now, let $\mathbf{x}, \mathbf{y} \in [0,M]^n $ be two words, such that $B_t^\mathbb{L}(\mathbf{x}) \cap B_t^\mathbb{L}(\mathbf{y}) = \emptyset$. Assume to the contrary  $d_\infty(\mathbf{x}, \mathbf{y}) < t +1$. 
    Let us assume $X$ and $Y$ are some matrix representations of $\mathbf{x}$ and  $\mathbf{y}$, respectively. 
    For each $1 \le j \le n$, the $j$-th entry of $\mathbf{x}, \mathbf{y}$, which are denoted by $x_j, y_j$ satisfy the following. If $x_j \geq y_j$, then there are $y_j$ $1$'s and $M-x_j$ $0$'s in the $j$-th column of both $X, Y$. Furthermore, we have $y_j + M - x_j \geq M - t$. If  $x_j < y_j$, there are $x_j$ $1$'s and $M-y_j$ $0$'s in column $j$ of both $X, Y$ and we have $x_j + M - y_j \geq M - t$. This implies that for each column in $X$ and $Y$ there are at least $M-t$ equal elements. Therefore, we can construct $R \in \{0,1 \}^{(M-t) \times n}$ which is a submatrix of both $X$ and $Y$. Thus, $ R \in B_t^\mathbb{L}(\mathbf{x}) \cap B_t^\mathbb{L}(\mathbf{y})$ which contradicts the initial assumption and the claim is proven.  
\end{proof}
}

\subsection{Bounds on the Size of Codes for Correcting Loss of Strands} 

First, we will introduce the following general proposition about the size of codes in partitions. 

\begin{proposition}\label{pro:partition}
    Let $\maxcar{E}{t} $ be the maximum cardinatliy of a code able to correct $t$ errors of type $\mathbb{E}$ in $[0,M]^n$. Furthermore, let $\mathcal{P}_1 ,\dots , \mathcal{P}_r$ for a positive integer $r \in \mathbb{N}$ be an exhaustive partition of $[0,M]^n $ such that $\bigcup_{i \in [1,r]} \mathcal{P}_i = [0,M]^n $. Let us denote by  $A_n^{\mathbb{E},1} \! \left(M, t \right) , \ldots , A_n^{\mathbb{E}, r} \! \left(M, t \right) $ the maximal cardinality of codes in each partition, which are able to correct $t$ errors of type $\mathbb{E}$. Then, we get 
    \begin{align*}
         \maxcar{E}{t} \leq \sum_{i=1}^r A_n^{\mathbb{E},i} \! \left(M, t \right) .
    \end{align*}
\end{proposition}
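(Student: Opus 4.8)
The plan is a direct decomposition argument: an optimal global code, intersected with the parts of the partition, yields codes in each part whose cardinalities sum to the global cardinality, and each part-code is bounded by the corresponding quantity $A_n^{\mathbb{E},i}(M,t)$.

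First I would fix an optimal $t$-error-correcting code $\mathcal{C} \subseteq [0,M]^n$ of type $\mathbb{E}$, so that $|\mathcal{C}| = \maxcar{E}{t}$, and set $\mathcal{C}_i = \mathcal{C} \cap \mathcal{P}_i$ for each $i \in [1,r]$. Since $\mathcal{P}_1, \dots, \mathcal{P}_r$ is an exhaustive partition of $[0,M]^n$, the sets $\mathcal{C}_1, \dots, \mathcal{C}_r$ are pairwise disjoint and their union is $\mathcal{C}$, hence $|\mathcal{C}| = \sum_{i=1}^r |\mathcal{C}_i|$.

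Next I would observe that being a $t$-error-correcting code of type $\mathbb{E}$ is a hereditary property: the defining condition that $B_t^\mathbb{E}(\mathbf{c}) \cap B_t^\mathbb{E}(\mathbf{c}') = \emptyset$ for all distinct codewords $\mathbf{c}, \mathbf{c}'$ is inherited by every subset of $\mathcal{C}$. In particular, each $\mathcal{C}_i$ is itself a $t$-error-correcting code of type $\mathbb{E}$ that is contained in $\mathcal{P}_i$, so by the definition of $A_n^{\mathbb{E},i}(M,t)$ as the maximal cardinality of such a code we obtain $|\mathcal{C}_i| \le A_n^{\mathbb{E},i}(M,t)$.

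Combining the two facts gives $\maxcar{E}{t} = |\mathcal{C}| = \sum_{i=1}^r |\mathcal{C}_i| \le \sum_{i=1}^r A_n^{\mathbb{E},i}(M,t)$, which is the claim. There is no genuine obstacle here; the only points requiring care are that the $\mathcal{P}_i$ be pairwise disjoint (so that no codeword is counted in two different $\mathcal{C}_i$) and that the partition be exhaustive (so that every codeword of $\mathcal{C}$ lands in some $\mathcal{C}_i$) — both are exactly the hypotheses placed on $\mathcal{P}_1, \dots, \mathcal{P}_r$. Note also that the argument does not use anything about the specific error type $\mathbb{E}$ beyond the fact that a $t$-error-correcting code is defined by pairwise disjointness of error balls, so the proposition applies uniformly to all five error models.
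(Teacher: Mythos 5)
Your proof is correct and is essentially the paper's own argument: intersect an optimal code $\mathcal{C}$ with the parts to get $\mathcal{C}_i = \mathcal{C} \cap \mathcal{P}_i$, note each $\mathcal{C}_i$ is still a $t$-error-correcting code inside $\mathcal{P}_i$ and hence has size at most $A_n^{\mathbb{E},i}(M,t)$, and sum. The only nuance is that the proposition formally assumes just exhaustiveness ($\bigcup_{i} \mathcal{P}_i = [0,M]^n$) rather than pairwise disjointness, so the paper uses the subadditive bound $|\mathcal{C}| \le \sum_{i=1}^r |\mathcal{C}_i|$ where you assert equality; this does not affect the conclusion.
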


\begin{proof}
    Assume $\mathcal{C}_t^\mathbb{E} $ is a code of maximum size in $[0,M]^n$ and for $1\leq i\leq r$, let $\mathcal{C}_t^{\mathbb{E},i} =  \mathcal{C}_t^\mathbb{E} \cap \mathcal{P}_i$. 
    Since the partition is exhaustive, we get that $\bigcup_{i=1}^r \mathcal{C}_t^{\mathbb{E},i} = \mathcal{C}_t^\mathbb{E}$.
    Then, it holds that $\mathcal{C}_t^{\mathbb{E},i}$ can correct $t$ errors of type $\mathbb{E}$ in $\mathcal{P}_i$, which assures that $|\mathcal{C}_t^{\mathbb{E},i}|\leq A_n^{\mathbb{E},i} \! \left(M, t \right) $. Therefore, $\maxcar{E}{t} = |\mathcal{C}_t^\mathbb{E}| \leq \sum_{i=1}^r|\mathcal{C}_t^{\mathbb{E},i}|\leq \sum_{i=1}^r A_n^{\mathbb{E},i} \! \left(M, t \right)$.
\end{proof}
Note that the inequality follows if the partition is not disjoint or if there exist codewords in one set of the partition which are confusable with codewords in other sets of the partition. 
We can now use this proposition to design suitable partitions of the set of composite vectors and derive the following upper bound.
\begin{theorem}\label{the:strandloss-partition}
    The maximum cardinality of a $t$-strand-loss-correcting code is given by 
    \begin{align*}
        \maxcar{L}{t} = A_n^{L_\infty} (M+1, t+1)\leq \left\lceil \frac{M+1}{t+1} \right\rceil^n .
    \end{align*}
\end{theorem}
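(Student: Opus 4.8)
The plan is to prove the two assertions in the statement separately: first the identity $\maxcar{L}{t} = A_n^{L_\infty}(M+1, t+1)$, and then the estimate $A_n^{L_\infty}(M+1,t+1) \le \lceil (M+1)/(t+1)\rceil^n$.

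The identity follows immediately from \autoref{cla:linfty}. By \autoref{def:strandloss}, a code $\mathcal{C}_t^\mathbb{L} \subseteq [0,M]^n$ is $t$-strand-loss-correcting precisely when $B_t^\mathbb{L}(\mathbf{c}) \cap B_t^\mathbb{L}(\mathbf{c}') = \emptyset$ for every pair of distinct codewords; by \autoref{cla:linfty} this is equivalent to $d_\infty(\mathbf{c},\mathbf{c}') \ge t+1$ for every such pair, i.e.\ to $\mathcal{C}_t^\mathbb{L}$ having minimum $L_\infty$-distance at least $t+1$. Taking the maximum over all such codes gives $\maxcar{L}{t} = A_n^{L_\infty}(M+1, t+1)$.

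For the upper bound I would invoke \autoref{pro:partition} with a box partition of $[0,M]^n$. Set $b \defeq \lceil (M+1)/(t+1)\rceil$ and split the alphabet $[0,M]$ into $b$ consecutive intervals $J_0, \dots, J_{b-1}$, the first $b-1$ consisting of exactly $t+1$ consecutive integers and the last of the remainder; this indeed exhausts the $M+1$ values in $b$ pieces. Take as partition the $b^n$ boxes $\mathcal{P}_{\mathbf{k}} \defeq J_{k_1} \times \cdots \times J_{k_n}$ for $\mathbf{k} \in [0,b-1]^n$, which is clearly exhaustive and disjoint. The key point is that within any single box, any two vectors $\mathbf{x},\mathbf{y}$ satisfy $d_\infty(\mathbf{x},\mathbf{y}) = \max_{j}|x_j-y_j| \le t$, since for each $j$ the entries $x_j,y_j$ lie in a common interval of at most $t+1$ consecutive integers. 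Hence, by the identity just proved, a $t$-strand-loss-correcting code can contain at most one codeword in each box, so $A_n^{\mathbb{L},\mathbf{k}}(M,t) \le 1$ for every $\mathbf{k}$. \autoref{pro:partition} then yields $\maxcar{L}{t} \le \sum_{\mathbf{k}} 1 = b^n = \lceil (M+1)/(t+1)\rceil^n$.

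I do not expect any real obstacle: this is a standard anticode/box-packing argument, and the only points needing brief verification are that the intervals $J_k$ cover $[0,M]$ in exactly $\lceil(M+1)/(t+1)\rceil$ pieces and that each box has $L_\infty$-diameter at most $t$, both elementary. (One could moreover observe that the bound is tight, by taking the $n$-fold product of the one-dimensional code $\{0, t+1, 2(t+1), \dots\}$, whose consecutive entries differ by exactly $t+1$; but the statement only requires the inequality.)
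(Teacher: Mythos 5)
Your proposal is correct and follows essentially the same route as the paper: the identity via \autoref{cla:linfty}, and the upper bound via \autoref{pro:partition} applied to the partition of $[0,M]^n$ into $\left\lceil \frac{M+1}{t+1}\right\rceil^n$ boxes of $L_\infty$-diameter at most $t$, each containing at most one codeword. Your closing remark on tightness matches the paper's \autoref{con:strandloss}.
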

\begin{proof} 
    Consider a partition of $[0,M]^n$ with the sets 
    \begin{align*}
        \mathcal{P}_\mathbf{u} \! = [u_1,u_1 \! +t] \times [u_2,u_2 \! + t] \times \! \cdots \! \times [u_n, u_n  \!+ t] \cap [0,M]^n,
        \\
        \forall \mathbf{u} 
        \in \left\{0, (t+1), 2(t+1),  \ldots, \left\lfloor \frac{M}{t+1}\right\rfloor  (t+1)\right\}^n.
    \end{align*}
    Thus, each $u_j$ for $j \in [1,n]$ is a multiple of $t+1$ and the sets $\mathcal{P}_\mathbf{u}$ are mutually disjoint and form a partition of $[0,M]^n$. 
    Furthermore, for every $\mathbf{p}, \mathbf{p}' \in \mathcal{P}_\mathbf{u}$ it holds that $d_\infty(\mathbf{p}, \mathbf{p}') < t+1 $, and thus the largest size of a code on every partition is 1. There are $\left\lfloor \frac{M}{t+1} +1 \right\rfloor^n = \left\lceil \frac{M+1}{t+1} \right\rceil^n$ sets $\mathcal{P}_\mathbf{u}$ which form an exhaustive partition of $[0,M]^n$. Together with \autoref{pro:partition}, we get that $ A_n^{L_\infty} (M+1, t+1) \leq \left\lceil \frac{M+1}{t+1} \right\rceil^n$ and with \autoref{cla:linfty} the theorem is proven.
\end{proof}

\subsection{Code Construction for Strand Loss Errors}

\begin{construction}\label{con:strandloss}
    Let $n,M>t$ be positive integers and let
    \begin{align*}
        \mathcal{C}_t^\mathbb{L} = \left\{ \mathbf{c} \in [0,M]^n : c_j \equiv 0 \mod{t +1} , \forall j \in [0,n] \right\} .
    \end{align*}
\end{construction}

\begin{theorem}\label{the:strand-loss-code}
    The code $\mathcal{C}_t^\mathbb{L}$ from \autoref{con:strandloss} is a $t$-strand-loss-correcting code.
\end{theorem}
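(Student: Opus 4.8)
The plan is to reduce the statement entirely to \autoref{cla:linfty}, which already establishes that a code $\mathcal{C}_t^\mathbb{L} \subseteq [0,M]^n$ is $t$-strand-loss-correcting if and only if every pair of distinct codewords is at $L_\infty$-distance at least $t+1$. So the whole proof amounts to a lower bound on the minimum $L_\infty$-distance of the code in \autoref{con:strandloss}.

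First I would take two distinct codewords $\mathbf{c}, \mathbf{c}' \in \mathcal{C}_t^\mathbb{L}$ and pick a coordinate $j \in [1,n]$ with $c_j \neq c_j'$. By the defining property of the construction, both $c_j$ and $c_j'$ are multiples of $t+1$, hence $c_j - c_j'$ is a nonzero multiple of $t+1$ and therefore $|c_j - c_j'| \geq t+1$. Taking the maximum over all coordinates gives $d_\infty(\mathbf{c}, \mathbf{c}') = \max_{j \in [1,n]} |c_j - c_j'| \geq t+1$. Applying \autoref{cla:linfty} then yields $B_t^\mathbb{L}(\mathbf{c}) \cap B_t^\mathbb{L}(\mathbf{c}') = \emptyset$ for all distinct $\mathbf{c}, \mathbf{c}' \in \mathcal{C}_t^\mathbb{L}$, which is exactly the defining property of a $t$-strand-loss-correcting code.

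I do not expect any real obstacle here: once \autoref{cla:linfty} is available, the argument is a one-line congruence computation. The only incidental points worth a remark are that the construction is well-defined and nonempty (the all-zero vector is always a codeword, and the hypothesis $M > t$ guarantees at least one further multiple of $t+1$ in $[0,M]$). If one wants the slightly stronger conclusion, I would also observe that the number of multiples of $t+1$ in $[0,M]$ is $\left\lfloor \frac{M}{t+1}\right\rfloor + 1 = \left\lceil \frac{M+1}{t+1}\right\rceil$, so that $|\mathcal{C}_t^\mathbb{L}| = \left\lceil \frac{M+1}{t+1}\right\rceil^n$; comparing with \autoref{the:strandloss-partition} shows the construction attains the upper bound and hence resolves \autoref{pro:strandloss}.
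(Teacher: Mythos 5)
Your proof is correct, but it takes a different route from the paper's. You reduce the statement to \autoref{cla:linfty}: distinct codewords differ in some coordinate by a nonzero multiple of $t+1$, hence $d_\infty(\mathbf{c},\mathbf{c}')\geq t+1$, and the forward direction of the claim gives disjointness of the error balls. The paper instead argues directly on the matrix representations: each column sum of a codeword is a multiple of $t+1$, and $t$ strand losses can decrease a column sum by at most $t$, so the received column sum $r_j$ satisfies $(k-1)(t+1) < r_j \le k(t+1)$ when $c_j = k(t+1)$; distinct codewords therefore cannot produce the same received matrix. The two arguments rest on the same arithmetic fact, but yours is more economical given that \autoref{cla:linfty} is already established, and it cleanly separates the metric characterization from the code design. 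The paper's direct version buys something your reduction leaves implicit: it exhibits the decoder (round each received column sum up to the nearest multiple of $t+1$), which is then reused in the combined strand-loss-plus-substitution construction later in the paper. Your closing remarks on nonemptiness and on the cardinality $\left\lceil \frac{M+1}{t+1}\right\rceil^n$ matching the bound of \autoref{the:strandloss-partition} agree with the paper's own discussion following the theorem.
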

\newcommand{\proTheStrandLossCode}{
\begin{proof}
    Consider the matrix representations $C, C'$ of $\mathbf{c}, \mathbf{c}' \in \mathcal{C}_t^\mathbb{L}$ and their resulting matrices $R, R'$ after $t$ strand losses. The number of ones in each column of $C, C'$ must be a multiple of $t+1$, say $k (t+1)$. Through $t$ strand losses, the number of ones can only decrease by $t$. Therefore, we get 
    \begin{align*}
        (k-1) (t+1) < \sum_{i=1}^M R_{i, j} \leq k(t+1) \forall j \in [1,n]
    \end{align*}
    Thus, the number of $1$'s in column $j$ of $R$ and $R'$ can only be equal if $c_j = c_j'$ for all $j \in [1,n]$. Hence, if $c_j \neq c_j'$ for any $j$, the resulting matrices must be different, which concludes the proof. 
\end{proof}
}

The cardinality of the code is given by 
\begin{align*}
    \left| \mathcal{C}_t^\mathbb{L} \right| = \left\lceil \frac{M+1}{t+1} \right\rceil^n = \maxcar{L}{t},
\end{align*}
which meets the bound of \autoref{the:strandloss-partition}. Therefore, we see that the bound is tight and the code optimal. 

\section{Deletion Errors} \label{sec:deletion}

In the following, we will analyze $t$-deletion correcting codes and solve \autoref{pro:deletion}. 

\begin{claim}\label{cla:indel}
     For $\mathbf{c}, \mathbf{c}' \in [0,M]^n$, we have the equivalence
     \begin{align*}
         B_t^\mathbb{ID}(\mathbf{c}) \cap B_t^\mathbb{ID}(\mathbf{c}') = \emptyset \iff B_t^\mathbb{D}(\mathbf{c}) \cap B_t^\mathbb{D}(\mathbf{c}') = \emptyset.
     \end{align*}
\end{claim}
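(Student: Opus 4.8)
The plan is to prove the two implications separately; one is immediate and the other is the Levenshtein-type core of the statement. For the direction $B_t^\mathbb{ID}(\mathbf{c})\cap B_t^\mathbb{ID}(\mathbf{c}')=\emptyset \implies B_t^\mathbb{D}(\mathbf{c})\cap B_t^\mathbb{D}(\mathbf{c}')=\emptyset$, note that a pattern of $t$ deletions is exactly an indel pattern with $t_d=t$ and $t_i=0$, so $B_t^\mathbb{D}(\mathbf{x})\subseteq B_t^\mathbb{ID}(\mathbf{x})$ for every $\mathbf{x}\in[0,M]^n$; hence disjointness of the indel balls forces disjointness of the deletion balls. The content is in the converse, which I would prove by contraposition: from some $R\in B_t^\mathbb{ID}(\mathbf{c})\cap B_t^\mathbb{ID}(\mathbf{c}')$ I would construct an element of $B_t^\mathbb{D}(\mathbf{c})\cap B_t^\mathbb{D}(\mathbf{c}')$.

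First I would fix the witnessing data: matrix representations $C$ of $\mathbf{c}$ and $C'$ of $\mathbf{c}'$ such that for each $i\in[1,M]$ the row $R_i$ is obtained from $C_i$ by $a_i$ insertions and $b_i$ deletions, and from $C_i'$ by $a_i'$ insertions and $b_i'$ deletions, with $\sum_i(a_i+b_i)\le t$ and $\sum_i(a_i'+b_i')\le t$. Since indel errors are localized to rows (the channel sends row $i$ of the input to row $i$ of the output), any row permutation needed to realize $R$ simultaneously as an indel-image of $\mathbf{c}$ and of $\mathbf{c}'$ can be absorbed into the choice of $C$ and $C'$, so the same row index $i$ is used on both sides. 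Because $|C_i|=|C_i'|=n$ and $|R_i|=n+a_i-b_i=n+a_i'-b_i'$, we get the key identity $a_i-b_i=a_i'-b_i'$, equivalently $\delta_i:=b_i+a_i'=a_i+b_i'$.

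The per-row step is the standard fact that whenever a string $w$ arises from a string $u$ by $a$ insertions and $b$ deletions in any order, $u$ and $w$ share a common subsequence $z$ with $|u|-|z|\le b$ and $|w|-|z|\le a$ (a short induction on the edit-script length). Applying this to $(C_i,R_i)$ and to $(C_i',R_i)$ gives subsequences $D_i$ of $C_i$ and $D_i'$ of $C_i'$ that are also subsequences of $R_i$ and omit at most $a_i$ respectively $a_i'$ positions of $R_i$; restricting $R_i$ to the intersection of the two kept-position sets yields a common subsequence $s_i$ of $C_i$ and $C_i'$ of length at least $|R_i|-a_i-a_i'=n-\delta_i$. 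Let $S$ be the $M$-row matrix whose $i$-th row is such an $s_i$; then $S$ is obtained from $C$ by $\sum_i(n-|s_i|)\le\sum_i\delta_i$ deletions and from $C'$ by the same number in each row. The crucial estimate is $2\sum_i\delta_i=\sum_i(b_i+a_i')+\sum_i(a_i+b_i')=\sum_i(a_i+b_i)+\sum_i(a_i'+b_i')\le 2t$, hence $\sum_i\delta_i\le t$. Thus $S\in B_{t'}^\mathbb{D}(\mathbf{c})\cap B_{t'}^\mathbb{D}(\mathbf{c}')$ for some $t'\le t$, and padding with $t-t'$ further deletions applied to the very same positions of $S$ on both sides gives an element of $B_t^\mathbb{D}(\mathbf{c})\cap B_t^\mathbb{D}(\mathbf{c}')$, contradicting the assumption.

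I expect the main obstacle to be exactly this budget bookkeeping. The naive bound only guarantees per-row common subsequences losing $\sum_i b_i$ symbols on one side and $\sum_i a_i'$ on the other, summing to as much as $2t$, which would only place a common word in the radius-$2t$ deletion balls; the reduction to radius $t$ hinges on the equal-length identity $a_i-b_i=a_i'-b_i'$, which makes $b_i+a_i'=a_i+b_i'$ and lets the two size constraints be averaged. A minor technical point, the discrepancy between "exactly $t$" and "at most $t$" deletions, is absorbed by the final padding step (legitimate as long as $S$ has at least $t-t'$ entries, e.g. whenever $Mn\ge t$).
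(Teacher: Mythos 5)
Your proof is correct, and it is worth noting that it is actually more complete than the paper's own argument. The paper proves the nontrivial direction by contradiction in essentially the same row-wise manner: it observes that insertions realizing a common $R$ must occur in matching rows of $C$ and $C'$ (after absorbing row permutations), and then invokes Levenshtein's classical equivalence between intersecting $d$-insertion balls and intersecting $d$-deletion balls on each affected row; it only writes out the pure-insertion case and asserts that ``the same argument holds for indel errors.'' You instead re-derive the per-row step from scratch (the common-subsequence extraction from an edit script) and, crucially, you supply the cross-row budget accounting that the paper's one-line dismissal of the indel case glosses over: for pure insertions the per-row counts automatically agree ($a_i=a_i'$) and the total deletion cost is trivially $\sum_i a_i=t$, but once insertions and deletions are mixed, the fact that the common subsequence $S$ costs at most $t$ deletions from each of $C$ and $C'$ genuinely requires your identity $\delta_i:=a_i+b_i'=a_i'+b_i$ together with the averaging step $2\sum_i\delta_i=\sum_i(a_i+b_i)+\sum_i(a_i'+b_i')\le 2t$. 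Your explicit handling of the exactly-$t$ versus at-most-$t$ discrepancy by padding is also a point the paper leaves implicit. In short: same strategy (reduce to rows, convert insertions to deletions), but your version makes precise the one step where the general indel claim could fail if done carelessly, so it is the more rigorous of the two.
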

\begin{proof}
    We will show the proof for insertion errors. The same argument holds for indel errors. 
    Let $\mathbf{c}, \mathbf{c}'  \in [0,M]^n$ with corresponding matrix representation $C, C'$ such that $B_t^\mathbb{D}(\mathbf{c}) \cap B_t^\mathbb{D}(\mathbf{c}') = \emptyset$ and assume in the contrary that there exists an $R^\mathbb{I} \in B_t^\mathbb{I}(\mathbf{c}) \cap B_t^\mathbb{I}(\mathbf{c}')$. 
    $R^\mathbb{I}$ can be received by inserting $t$ symbols at position $(k_\ell, h_\ell), \ell \in [1,t]$ in $C$ and $t$ symbols at $(k_\ell', h_\ell'), \ell \in [1,t]$ in $C'$. 
    As the length of the rows change, the insertions must happen in the same rows so for each $\ell \in [1,t]$ we have one $k_\ell = k_\ell'$. For two binary vectors of length $n$ with $d<n$, it is known from \autocite{levenshtein_1966_BinaryCodes} that they share an element in the $d$-deletion ball if and only if they share an element in the $d$-insertion ball. This result can be applied to every row of $C, C'$ respectively $R$ affected by insertions and we get that there exists an $R^\mathbb{D} \in B_t^{\mathbb{D}}(\mathbf{c}) \cap B_t^{\mathbb{D}}(\mathbf{c}')$, which contradicts the assumption. The converse follows with the same argument. 
\end{proof}

Using the equivalence of \autoref{cla:indel}, if we solve \autoref{pro:deletion}, then \autoref{pro:indel} is also already covered. 

\subsection{Size of Error Balls for Single Deletion Errors}

To derive the error ball size, we use the following definitions. Let $\run{\mathbf{y}}$ denote the number of runs in the binary vector $\mathbf{y}$ and $V(\mathbf{x})$ be the set of all binary vectors which can be a row in $X \in \setX$,
    \vspace{-0.2cm}
    \begin{align*}
        V(\mathbf{x}) = \{ \mathbf{y} \in \{0,1 \}^n : \nexists j \in [1,n] : x_j = M \text{ and } y_j = 0 \\ \text{ or } x_j = 0 \text{ and } y_j = 1 \}.
    \end{align*}

\begin{theorem}
    The error ball size for a single deletion is given by 
        \vspace{-0.2cm}
    \begin{equation*}
        |B_1^\mathbb{D}(\mathbf{x})| = M \sum_{\mathbf{y} \in V(\mathbf{x})}  \run{\mathbf{y}} \prod_{j=1}^{n} \binom{M-1}{x_j - y_j}.
    \end{equation*}
\end{theorem}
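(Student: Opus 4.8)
The plan is to reduce the count to received matrices in which a single prescribed row is the shortened one, and then to decompose that count according to the content of that row \emph{before} the deletion.

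First I would observe that every $R\in B_1^\mathbb{D}(\mathbf{x})$ has exactly one row of length $n-1$, namely the row in which the deletion was applied, while all its other rows have length $n$; call the index of the short row the \emph{active index}. This partitions $B_1^\mathbb{D}(\mathbf{x})$ into $M$ parts, one per possible active index $k\in[1,M]$. Since $\setX$ is closed under row permutations (noted right after the definition of $\setX$), transposing rows $1$ and $k$ is a bijection between the part with active index $k$ and the part with active index $1$: it carries a witnessing matrix representation together with its deletion to a witnessing matrix representation together with its deletion for the permuted matrix. Hence $|B_1^\mathbb{D}(\mathbf{x})|=M\cdot N$, where $N$ counts the matrices of $B_1^\mathbb{D}(\mathbf{x})$ whose first row has length $n-1$.

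Next I would unpack $N$. Such a matrix is a tuple $(\mathbf{z},\mathbf{w}_2,\dots,\mathbf{w}_M)$ with $\mathbf{z}\in\{0,1\}^{n-1}$ and $\mathbf{w}_i\in\{0,1\}^n$ for which there exists $\mathbf{y}\in\{0,1\}^n$ such that $\mathbf{z}$ is obtained from $\mathbf{y}$ by a single deletion and $y_j+\sum_{i=2}^M w_{i,j}=x_j$ for all $j$. The crucial point is that the column-sum equations force $y_j=x_j-\sum_{i=2}^M w_{i,j}$, so the pre-deletion first row $\mathbf{y}$ is \emph{uniquely determined} by $\mathbf{x}$ and $\mathbf{w}_2,\dots,\mathbf{w}_M$; there is no overcounting across different lifts. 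Therefore, splitting the count by this unique $\mathbf{y}$, one gets $N=\sum_{\mathbf{y}\in\{0,1\}^n}|D(\mathbf{y})|\cdot W_\mathbf{y}$, where $D(\mathbf{y})$ is the set of distinct length-$(n-1)$ vectors obtainable from $\mathbf{y}$ by one deletion and $W_\mathbf{y}$ is the number of ways to choose $\mathbf{w}_2,\dots,\mathbf{w}_M\in\{0,1\}^n$ with $\sum_{i=2}^M w_{i,j}=x_j-y_j$ for every $j$. Finally I would evaluate the two factors: by Levenshtein's classical result \autocite{levenshtein_1966_BinaryCodes}, $|D(\mathbf{y})|=\run{\mathbf{y}}$, the number of runs of $\mathbf{y}$; and choosing the remaining rows independently per column gives $W_\mathbf{y}=\prod_{j=1}^n\binom{M-1}{x_j-y_j}$ (with the convention $\binom{a}{b}=0$ for $b\notin[0,a]$). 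This product is nonzero precisely when $0\le x_j-y_j\le M-1$ for all $j$, which is exactly the membership condition $\mathbf{y}\in V(\mathbf{x})$, so restricting the sum to $V(\mathbf{x})$ discards only zero terms and yields $|B_1^\mathbb{D}(\mathbf{x})|=M\sum_{\mathbf{y}\in V(\mathbf{x})}\run{\mathbf{y}}\prod_{j=1}^n\binom{M-1}{x_j-y_j}$.

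I expect the only genuine subtlety to be the no-overcounting step: one must notice that a received matrix cannot arise from two different pre-deletion first rows, which is exactly what the uniqueness of $\mathbf{y}$ given the other rows and the target column sums provides. Once that is settled, the rest is routine — one factor is Levenshtein's run count and the other a product of binomials — and matching the binomials' support to $V(\mathbf{x})$ is a short case check on $x_j\in\{0,M\}$.
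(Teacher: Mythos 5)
Your proposal is correct and follows essentially the same decomposition as the paper's proof: a factor of $M$ for the position of the deleted-from row, a sum over the pre-deletion row $\mathbf{y}$, Levenshtein's $\run{\mathbf{y}}$ for the distinct deletion outcomes, and $\prod_{j=1}^{n}\binom{M-1}{x_j-y_j}$ for the remaining rows. Your treatment of the no-overcounting step (recovering $\mathbf{y}$ uniquely from the column sums and the unaffected rows) is a cleaner phrasing of the paper's remark that the remaining rows $X'$ differ for different choices of $\mathbf{y}$.
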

\begin{proof}
    The proof follows by considering all possible deletions. First, we let $\textbf{y}\in V(\textbf{x)}$ be a  row in the matrix representation of the vector $\textbf{x}$, and assume a deletion error occurred in $\textbf{y}$. Note, that the vector $\textbf{y}$ can be located in any of the $M$ rows of the matrix representation $X$, whereas each of these locations results  a different matrix $X$. Furthermore, in~\autocite{levenshtein_1966_BinaryCodes}, it was shown that the number of words that can be obtained by deleting a symbol in  $\textbf{y}$ is given by $\run{\mathbf{y}}$. Additionally, if the location of $\textbf{y}$  is already selected, the remaining rows of $X$ consist of all possible $X' \in \mathcal{X}\!\left( \mathbf{x} - \mathbf{y} \right) \subseteq \{0,1 \}^{(M-1) \times n}$. The number of such $X'$ is given by $\prod_{j=1}^{n} \binom{M-1}{x_j - y_j}$. 
    Finally, note that even though two different vectors $\mathbf{y}$ could end up as the same vector after the deletion occurred, the remaining rows $X'$ will be different, and thus the resulting element of $B_1^\mathbb{D}(\mathbf{x})$ will be different for every choice of $\mathbf{y} \in V(\mathbf{x})$. 
\end{proof}

\subsection{Upper Bound on the Size of Deletion-Correcting Codes}

To derive an upper bound on $t$-deletion-correcting codes as in Definition~\ref{def:deletion}, we will reduce them to classical binary deletion-correcting codes. Therefore, in the remainder of the section, we restrict $t\leq n$. Further notation about these codes is similar to \autocite{levenshtein_1966_BinaryCodes, sloane_2002_SingleDeletionCorrectingCodes}.

\begin{definition}
    Let $\mathbf{x} \in \{ 0, 1 \}^n$ be a binary vector of length~$n$. It is said that \textbf{$t$ deletion errors} occurred in $\mathbf{x}$, if there exist $t$ positions $h_\ell \in [1,n], \ell \in [1,t] $, which are removed from $\mathbf{x}$ to obtain $\mathbf{r} \in \{ 0, 1\}^{n- t}$. In this setup, a code that can correct $t$ deletion errors is called a $t$-deletion correcting code over binary vectors. The largest code cardinality of such code is denoted by $D(n,t)$. 
\end{definition} 

\begin{theorem}\label{the:deletion_cardinality}
    For $t < n$, the cardinality of a $t$-deletion-correcting code $\maxcar{D}{t}$ is bounded from above by 
    \begin{equation*}
        \maxcar{D}{t} \leq \left( \left\lceil \frac{M+1}{2} \right\rceil \right)^n \cdot D(n,t).
    \end{equation*}
\end{theorem}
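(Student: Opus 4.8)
The plan is to combine \autoref{pro:partition} with a reduction of each part of a suitable partition of $[0,M]^n$ to a classical binary $t$-deletion-correcting code, as announced just before the statement. First I would partition $[0,M]^n$ coordinate-wise into consecutive pairs: let $\mathbf{u}$ range over all vectors whose $n$ entries are each an even number in $[0,M]$ (there are $\lceil (M+1)/2 \rceil$ such numbers, hence $\lceil (M+1)/2 \rceil^n$ such vectors), and put $\mathcal{P}_\mathbf{u} = \prod_{j=1}^n \big( \{u_j, u_j+1\} \cap [0,M] \big)$. These sets are mutually disjoint and exhaust $[0,M]^n$, so they satisfy the hypotheses of \autoref{pro:partition}, and there are exactly $\lceil (M+1)/2 \rceil^n$ of them.

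Next, fix a part $\mathcal{P}_\mathbf{u}$ and build an injection $\phi : \mathcal{P}_\mathbf{u} \to \{0,1\}^n$ together with one ``filler'' matrix $U \in \{0,1\}^{(M-1)\times n}$ depending only on $\mathbf{u}$, such that for every $\mathbf{c} \in \mathcal{P}_\mathbf{u}$ the $M \times n$ matrix $C$ with first row $\phi(\mathbf{c})$ and remaining $M-1$ rows equal to $U$ lies in $\mathcal{X}(\mathbf{c})$. Concretely, on a coordinate $j$ with $u_j < M$ set $\phi(\mathbf{c})_j = c_j - u_j \in \{0,1\}$ and give column $j$ of $U$ weight $u_j$; on a coordinate with $u_j = M$ (possible only when $M$ is even, and forcing $c_j = M$) set $\phi(\mathbf{c})_j = 1$ and give column $j$ of $U$ weight $M-1$. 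In every case the prescribed column weight of $U$ lies in $[0,M-1]$, so $U$ exists, the column sums of $C$ reconstruct $\mathbf{c}$, and $\phi$ is injective on $\mathcal{P}_\mathbf{u}$.

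The crux is then a transfer of confusability. Suppose $\mathbf{c},\mathbf{c}' \in \mathcal{P}_\mathbf{u}$ and the binary vectors $\phi(\mathbf{c}),\phi(\mathbf{c}') \in \{0,1\}^n$ share a common element $\mathbf{w}$ in their classical binary $t$-deletion balls. Applying the corresponding $t$ deletions to the first row of $C$ and of $C'$ respectively, and leaving all other rows untouched, yields in both cases the same matrix $R$: first row $\mathbf{w}$ of length $n-t$, remaining rows $U$. Hence $B_t^\mathbb{D}(\mathbf{c}) \cap B_t^\mathbb{D}(\mathbf{c}') \neq \emptyset$. Contrapositively, if $\mathcal{C}' \subseteq \mathcal{P}_\mathbf{u}$ is a $t$-deletion-correcting code then $\phi(\mathcal{C}') \subseteq \{0,1\}^n$ is a binary $t$-deletion-correcting code of length $n$, so $|\mathcal{C}'| = |\phi(\mathcal{C}')| \le D(n,t)$. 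Thus the maximal cardinality of a $t$-deletion-correcting code within each part is at most $D(n,t)$, and \autoref{pro:partition} yields $\maxcar{D}{t} \le \lceil (M+1)/2 \rceil^n \cdot D(n,t)$.

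The main obstacle is designing $\phi$ and $U$ cleanly: one must hide the ``binary direction'' of a part inside a single row of the matrix while keeping the other $M-1$ rows frozen and legal. This is exactly what forces the partition to consist of consecutive pairs, and it requires the small fix at the forced value $c_j = M$ when $M$ is even — there column weight $M$ cannot be realized with only $M-1$ filler rows, so the (constant) encoded bit on that coordinate must be taken to be $1$ rather than $0$. Everything else (disjointness and exhaustiveness of the partition, counting the $\lceil (M+1)/2 \rceil^n$ parts, and invoking \autoref{pro:partition}) is routine.
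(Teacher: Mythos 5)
Your proof is correct and follows essentially the same route as the paper: partition $[0,M]^n$ into coordinate-wise consecutive pairs, freeze $M-1$ rows per part, reduce confusability within a part to binary $t$-deletion confusability of a single distinguished row, and invoke \autoref{pro:partition}. The only (immaterial) difference is the even-$M$ case: you keep a genuinely disjoint partition with a singleton cell $\{M\}$ and a special bit assignment there, whereas the paper uses the overlapping cover indexed by $\{0,1,3,5,\ldots,M-1\}$; both give the same count $\left\lceil \frac{M+1}{2} \right\rceil^n$.
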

\begin{proof}
    We distinguish between two cases. 
    \\
    \emph{Case 1 ($M$ is odd):} 
    Consider a partition of $[0,M]^n$ with the sets 
    \begin{equation*}
        \mathcal{P}_\mathbf{u} = [u_1, u_1 + 1] \times [u_2, u_2 + 1] \times \dots \times [u_n, u_n + 1],
    \end{equation*}
    for all $ \mathbf{u} = (u_1, \ldots , u_n ) \in \{0, 2, \dots, M-1\}^n$. Let $\mathbf{p} \in \mathcal{P}_\mathbf{u}$ be a composite vector. 
    Next, we define the vector $\mathbf{u}_\mathbf{p} = \mathbf{p} \mod 2 \in \{0,1\}^n$ such that we get $\mathbf{p} = \mathbf{u} + \mathbf{u}_\mathbf{p}$. Then, we can obtain a matrix representation 
    \begin{equation}
        P = 
        \begin{bmatrix}
            A_\mathbf{p} \\ \mathbf{u}_\mathbf{p}
        \end{bmatrix}
        \label{eq:p-matrix}
    \end{equation}
    with a submatrix $A_\mathbf{p}$ which is the same for all $\mathbf{p} \in \mathcal{P}_\mathbf{u}$. Thus, for $\mathbf{p}, \mathbf{p}' \in \mathcal{P}_\mathbf{u}$ with corresponding matrix representation $P, P'$ as in \eqref{eq:p-matrix}, we get that $P$ and $P'$ are not confusable under $t$ deletions, only if the binary vectors $\mathbf{u}_\mathbf{p}$ and $\mathbf{u}_{\mathbf{p}'}$ are not confusable under $t$ deletions. 
    Therefore, a $t$-deletion-correcting code $\mathcal{C}_t^\mathbb{D}$ on each set $\mathcal{P}_\mathbf{u}$ has cardinality at most $D(n,t)$. There are $\left( (M+1)/2 \right)^n$ possible sets $\mathcal{P}_\mathbf{u}$ which form an exhaustive partition of $[0,M]^n$. Together with \autoref{pro:partition}, the theorem follows.
    \\
   \emph{Case 2 ($M$ is even):} 
    We choose $\mathcal{P}_\mathbf{u}$ such that $ \mathbf{u} = (u_1, \ldots , u_n ) \in \{0,1, 3, 5,  \ldots, M-1\}^n$ and follow the same argument as in Case 1.
\end{proof}

\subsection{Construction of a Single-Deletion-Correcting Code}

In the following, we present a single-deletion-correcting code (SDC). Our proposed construction is based on the Varshamov-Tenengolts (VT) codes \autocite{sloane_2002_SingleDeletionCorrectingCodes,varshamov_1965_CodesWhich}, which were proven by Levenshtein~\cite{levenshtein_1966_BinaryCodes} to be binary single-strand single-deletion correcting codes. 

The \emph{$VT$ syndrome} of a vector $\textbf{x} =(x_1, \ldots, x_n) \in\{ 0,1\}^n$, denoted by $s(\textbf{x})\in [0,n]$, is defined as $ s(\textbf{x}) \triangleq \sum_{j=1}^n j x_j \mod(n+1)$.   For $a\in[0,n]$, the length-$n$ VT code with parameter $a$, denoted by $VT_a(n)$, is defined as follows. 
\begin{equation*}
VT_a(n) \triangleq \left\{ \mathbf{x} \in \{0,1\}^n : s(\textbf{x})= a \right\}.
\end{equation*}

It turns out that a small adaption is sufficient to apply these codes to correct a single deletion in our composite channel model, as seen in the next construction. 

\begin{construction}
Let $M\ge1$ be a positive integer and let $a \in [0,n]$.
    \vspace{-0.2cm}
    \begin{equation*}
        \mathcal{C}_1^{\mathbb{D}}(a) = \Big\{
        \mathbf{c} \in [0,M]^n : \sum_{j=1}^{n} j \cdot c_j \equiv a \mod (n+1)
        \Big\}
    \end{equation*}
\end{construction}

\begin{theorem}\label{the:sdc}
The code $\mathcal{C}_1^{\mathbb{D}}(a)$ is an SDC for all $a \in [0,n]$. 
\end{theorem}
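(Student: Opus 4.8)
The plan is to reduce a single deletion in the composite channel to a single deletion in a cleverly chosen binary row, and then invoke the classical VT decoding guarantee from Levenshtein. First I would fix a codeword $\mathbf{c} \in \mathcal{C}_1^{\mathbb{D}}(a)$ and suppose a single deletion occurs at position $(k, h)$ of some matrix representation $C$ of $\mathbf{c}$, producing the received matrix $R$ (with $M$ rows, one of length $n-1$). The key observation is that from $R$ we can recover the multiset of column sums at all positions except possibly around the deletion site: summing the ones in each column of $R$ gives a vector $\mathbf{r}$ which is exactly the composite vector obtained from $\mathbf{c}$ by a single deletion in the binary sense (the deleted row contributed $0$ or $1$ to the column it left from, and $R$ shifts that row left). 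More precisely, I would argue that $\mathbf{r}$ is obtained from $\mathbf{c}$ by deleting one coordinate and then possibly decrementing one of the remaining coordinates by $1$ (the shifted symbol $C_{k,h}$ lands on top of a column that now has one fewer contributing entry only if... ) — actually the cleanest framing: treat the deleted row $C_k \in \{0,1\}^n$ as a binary vector, note $\mathbf{c} = C_k + \mathbf{c}'$ where $\mathbf{c}' \in [0,M-1]^n$ is the column-sum vector of the remaining $M-1$ rows, and observe that $\mathbf{r} = \mathbf{r}_k + \mathbf{c}'_{\text{trunc}}$ where $\mathbf{r}_k$ is the single-deletion result of $C_k$ and $\mathbf{c}'_{\text{trunc}}$ is $\mathbf{c}'$ with its last coordinate dropped (since all $M$ rows of $R$ have the non-deleted columns intact except for the global left-shift past position $h$).

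The main technical step is then to show the VT syndrome constraint $\sum_j j\,c_j \equiv a \pmod{n+1}$ lets the decoder pin down $\mathbf{c}$ from $\mathbf{r}$. I would mimic the Levenshtein/VT decoding argument directly at the level of composite vectors: given $\mathbf{r} \in [0,M]^{n-1}$ with $\sum_j r_j = \big(\sum_j c_j\big) - s$ for the deleted symbol value $s \in \{0,1\}$, the decoder first determines $s$ and the "deficiency" from the known syndrome $a$ versus $\sum_j j\, r_j \bmod (n+1)$, exactly as in the binary VT decoder, because the syndrome is linear in the coordinates. Concretely, the difference $a - \sum_{j} j\, r_j \pmod{n+1}$ decomposes, just as in Levenshtein's proof, into a contribution telling us how many of the surviving coordinates sat to the right of the deletion (each such coordinate had its index drop by one) plus the weighted contribution of the deleted value; since every coordinate of a composite vector still lies in $[0,M]$, the same counting inequalities that make the binary VT decoder work go through, with the role of "number of $1$s to the right" replaced by "sum of composite values to the right." I expect this is where the real work lies: I must check that the decoder can disambiguate the deletion position using only the $\bmod\,(n+1)$ syndrome, given that composite entries can be larger than $1$, and confirm no two distinct composite vectors in $\mathcal{C}_1^{\mathbb{D}}(a)$ share a single-deletion descendant.

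A cleaner alternative I would pursue in parallel, and probably the one to write up: appeal to \autoref{cla:indel}-style reasoning is not available here, but I can use the equivalence machinery of the single-deletion VT code rowwise. Pick the canonical matrix representation of $\mathbf{c}$ in which each column is sorted ($0$s on top, $1$s on bottom); then the bottom row $C_M$ is the binary indicator of $\{j : c_j \ge 1\}$ — not quite what we want. Instead, the slick route: show $\mathcal{C}_1^{\mathbb{D}}(a)$ being an SDC is equivalent to the statement that for $\mathbf{c}\neq\mathbf{c}'$ with equal VT syndrome, $B_1^{\mathbb{D}}(\mathbf{c})\cap B_1^{\mathbb{D}}(\mathbf{c}')=\emptyset$, and prove the contrapositive: if $R$ lies in both balls, reconstruct the common column-sum descendant $\mathbf{r}$, apply the binary VT decoding identity (which only uses linearity of $s(\cdot)$ and the range bound on coordinates, both of which hold here) to conclude $\mathbf{c}=\mathbf{c}'$. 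The obstacle remains verifying the range/counting bounds in the decoding identity carry over from alphabet $\{0,1\}$ to $[0,M]$; I would handle it by noting that the only place binary-ness was used in Levenshtein's argument is the bound $0 \le (\text{partial sum}) \le (\text{number of coordinates})$, and in the composite setting the analogous bound $0 \le \sum_{j>p} c_j \le M(n-p)$ together with the fact that the deleted symbol contributes at most $1$ still forces a unique valid $(p,s)$.
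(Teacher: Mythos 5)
Your proposal does not close the argument, and its main line of attack would fail as written. You try to run the VT decoding argument at the level of the column-sum vector $\mathbf{r}$, but a deletion at position $(k,h)$ does not act on $\mathbf{c}$ as the deletion of a coordinate: it leaves the rows $i\neq k$ intact and shifts only row $k$, so the $j$-th column sum changes by $C_{k,j+1}-C_{k,j}$ for $j\ge h$. Hence $\mathbf{r}$ depends on the particular matrix representation and on which row was hit, not only on $\mathbf{c}$, so the ``composite VT decoding identity'' you want to invoke is not available; moreover $\mathbf{r}$ alone does not determine $\mathbf{c}$, because undoing the shift requires knowing the bits of the damaged row to the right of $h$, which are not recoverable from column sums. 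Your fallback bound $0\le\sum_{j>p}c_j\le M(n-p)$ does not rescue the counting step either: that quantity can exceed the modulus $n+1$, so the syndrome residue cannot single out a unique $(p,s)$ the way the ``number of ones to the right'' (which is at most $n-1$) does in the binary case. You flag both points yourself as ``where the real work lies,'' and they are exactly where the proof is missing.

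The idea you are missing is that the problem never needs to be lifted to the composite alphabet. Because $s(\cdot)$ is linear, the code constraint factors through the rows: $\sum_{i=1}^{M}s(C_i)=\sum_{j=1}^{n}j\,c_j\equiv a \pmod{n+1}$. After one deletion the damaged row is identifiable (it is the unique row of length $n-1$), and all other rows arrive intact, so the decoder computes $b\equiv a-\sum_{i\neq k}s(C_i)\pmod{n+1}$ and learns that $C_k\in VT_b(n)$. Correcting the error is then literally binary VT decoding of a single row, with no alphabet-size issues at all, and the recovered matrix yields $\mathbf{c}$ by summing columns. This is the paper's proof; it is a short reduction once you commit to working row-wise rather than with column sums, and your first paragraph already contains all the ingredients (linearity of the syndrome, the intact rows being fully known) without assembling them this way.
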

\begin{proof}
We prove the statement for $a=0$, and we denote $\mathcal{C}_1^{\mathbb{D}}(a) \triangleq \mathcal{C}_1^{\mathbb{D}}$. The proof is the same for other values of $a$.  Let $\textbf{c} = (c_1, \ldots, c_n)\in \mathcal{C}_1^\mathbb{D}$ and let $C \in \{0,1\}^{M\times n}$ be a possible matrix representation of $\textbf{c}$, where for $1\le i \le M$, $C_i$ denotes the $i$-th row of $C$. By the code definition, we have that,  
 \begin{align*}
 \sum_{i=1}^M s(C_i) &= \sum_{i=1}^M \sum_{j=1}^n j C_{i, j}  =  \sum_{j=1}^n  \sum_{i=1}^M j C_{i, j} \\&= \sum_{j=1}^n j \sum_{i=1}^M  C_{i, j} = \sum_{j=1}^n j c_{j} \equiv 0 \mod (n+1).
 \end{align*}

 Next, let us assume a deletion occurred in the $k$-th row of $C$, for some $1\le k \le M$.  
 Since the rest of the rows of $C$ have not experienced any error event, it is possible to calculate their syndromes. Therefore, it holds that, 
 \begin{align*}
 \sum_{i=1, i \ne k}^M s(C_i) \equiv 0-s(C_k) \mod (n+1),
\end{align*}
which implies that the VT syndrome of $C_k$ can be retrieved and denote its value by $s(C_k) = b \in [0,n]$. Note that in this case the $k$-th row of $C$ is a codeword in the code $VT_b(n)$ that can correct a single deletion. Thus, it is possible to correct the deleted symbol in the $k$-th row of $C$ by using the decoder of the code $VT_b(n)$.
\end{proof}

\begin{corollary}
    There exists an $a \in [0,n]$ such that
    \begin{align*}
       \maxcar{D}{1}\ge \left| \mathcal{C}_1^{\mathbb{D}}(a) \right| \ge \left\lceil \frac{(M+1)^n}{n+1} \right\rceil .
    \end{align*}
\end{corollary}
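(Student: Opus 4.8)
The plan is to exhibit a single value of $a$ for which the code $\mathcal{C}_1^{\mathbb{D}}(a)$ is large, using a simple averaging (pigeonhole) argument over the $n+1$ possible syndrome values. We already know from \autoref{the:sdc} that each $\mathcal{C}_1^{\mathbb{D}}(a)$ is a single-deletion-correcting code, so its cardinality is automatically a lower bound on $\maxcar{D}{1}$; the only thing left to prove is the numerical bound $\left\lceil (M+1)^n / (n+1) \right\rceil$ on the size of the best such code.

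First I would observe that the map $\mathbf{c} \mapsto \sum_{j=1}^n j c_j \bmod (n+1)$ partitions the whole space $[0,M]^n$ into the $n+1$ classes $\mathcal{C}_1^{\mathbb{D}}(0), \mathcal{C}_1^{\mathbb{D}}(1), \ldots, \mathcal{C}_1^{\mathbb{D}}(n)$, which are pairwise disjoint and whose union is all of $[0,M]^n$. Hence $\sum_{a=0}^{n} \left| \mathcal{C}_1^{\mathbb{D}}(a) \right| = (M+1)^n$. By the pigeonhole principle, at least one term in this sum of $n+1$ nonnegative integers is at least the average $(M+1)^n/(n+1)$, and since it is an integer it is at least $\left\lceil (M+1)^n/(n+1) \right\rceil$. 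Fixing this particular $a$ gives $\left| \mathcal{C}_1^{\mathbb{D}}(a) \right| \ge \left\lceil (M+1)^n/(n+1) \right\rceil$, and combining with $\maxcar{D}{1} \ge \left| \mathcal{C}_1^{\mathbb{D}}(a) \right|$ (which follows from \autoref{the:sdc}) completes the argument.

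There is essentially no obstacle here — this is a textbook averaging argument of exactly the same flavor as the classical bound showing that the largest VT code $VT_a(n)$ has size at least $2^n/(n+1)$. The one small point worth stating explicitly in the write-up is that the classes are genuinely disjoint and exhaustive (so that their sizes really do sum to $(M+1)^n$), which is immediate since each composite vector $\mathbf{c}$ has exactly one residue $\sum_{j=1}^n j c_j \bmod (n+1)$. If one wanted to be slightly more careful about the ceiling, one could note that the average need not be an integer but the maximum of a finite set of integers whose sum is $(M+1)^n$ is at least $\lceil (M+1)^n/(n+1)\rceil$; that is the standard strengthening and requires no extra work.
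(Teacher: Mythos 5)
Your proposal is correct and matches the (implicit) argument behind the paper's corollary: the $n+1$ syndrome classes $\mathcal{C}_1^{\mathbb{D}}(0),\ldots,\mathcal{C}_1^{\mathbb{D}}(n)$ partition $[0,M]^n$, so by pigeonhole one of them has size at least $\left\lceil (M+1)^n/(n+1)\right\rceil$, and Theorem~\ref{the:sdc} supplies the first inequality. Nothing is missing.
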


Now recall that Theorem~\ref{the:deletion_cardinality} states an upper bound on the size of SDC, given by $\maxcar{D}{t} \leq \left( \left\lceil \frac{M+1}{2} \right\rceil \right)^n \cdot D(n,t)$, while in \autocite{kulkarni_2013_NonasymptoticUpper} it was shown that under binary alphabet $D(n,1) \le \frac{2^n-2}{n-1}.$ Thus, combining these two results for odd $M$, we get that,  $\maxcar{D}{1} \leq \left(\frac{M+1}{2} \right)^n \cdot\frac{2^n-2}{n-1}\leq \frac{(M+1)^n}{n-1}$. The latter implies that the construction of $\mathcal{C}_1^{\mathbb{D}}(a)$ is asymptomatically optimal for odd $M$. 

\begin{example}\label{exa:del-correct}
  If we look at \autoref{exa:errors}, we have $\mathbf{x} = (3, 5, 3, 2) $ and the $s(\mathbf{x}) \mod 5 = 0 $. 
  The shorter strand in which the deletion must have occurred is the second row of $R^\mathbb{D}$ and the VT-code syndrome of the remaining rows is given by $s(2, 4, 3, 2) \mod 5 = 2 $. This means that we can correct the row by treating it as a binary VT-code with syndrome $3$ using the same algorithm as in \autocite{levenshtein_1966_BinaryCodes,sloane_2002_SingleDeletionCorrectingCodes}.  
\end{example}

\section{Combination of Error Types}
In this section we will consider a combination of $t$ strand loss errors and one substitution error. 
\begin{construction}
    Let $M>1$ and $t<M$ positive integers. Further, let $\mathcal{C}_H$ be a binary length-$n$ single substitution correcting code (e.g. Hamming code). 
    \vspace{-0.2cm}
    \begin{align*}
        \mathcal{C}_{t,1}^{\mathbb{L,S}} \! \! = \! \left\{ \!
            \mathbf{c} \in [0,M]^n \!  : c_j \equiv 0 \! \! \! \! \mod{t+1}, \frac{\mathbf{c}}{t+1} \! \! \! \!  \mod{2} \in \mathcal{C}_H
        \right\}
    \end{align*}
\end{construction}
\begin{theorem}
    The code $\mathcal{C}_{t,1}^\mathbb{L,S}$ can correct $t$ strand loss errors and one substitution error. 
\end{theorem}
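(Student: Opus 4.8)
The plan is to show that the two error types can be decoded sequentially: first recover the composite vector $\mathbf{c}$ up to the ambiguity introduced by the single substitution, and then use the Hamming-code structure of $\mathbf{c}/(t+1) \bmod 2$ to pin down which position was substituted. Let $\mathbf{c} \in \mathcal{C}_{t,1}^{\mathbb{L,S}}$ with matrix representation $C$, and suppose $t$ strand losses and one substitution occur, yielding $R \in \{0,1\}^{(M-t)\times n}$; let $\mathbf{r}$ be its column-sum vector. In each column $j$, removing $t$ rows changes the number of ones by some amount in $[0,t]$ (downward), and the single substitution changes it by at most $1$ (in either direction), so $r_j \in [c_j - t - 1, \, c_j + 1]$. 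Because $c_j$ is a multiple of $t+1$, the candidate multiples of $t+1$ in this interval are exactly $c_j$ and $c_j - (t+1)$; hence from $r_j$ we can narrow $c_j$ to one of two consecutive multiples of $t+1$, i.e.\ we recover $\widehat{c}_j \defeq (t+1)\lceil r_j/(t+1)\rceil$ which equals $c_j$ unless $r_j \le c_j - t - 1$, and the latter forces $r_j = c_j - t - 1$ (all $t$ lost rows had a $1$ there and the substitution turned a remaining $1$ into a $0$), which can happen in at most one column since there is only one substitution.

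Thus the vector $\widehat{\mathbf{c}} = (\widehat{c}_1,\dots,\widehat{c}_n)$ agrees with $\mathbf{c}$ in all but at most one coordinate, and where it disagrees it is off by exactly $t+1$. Dividing by $t+1$ and reducing mod $2$, the vector $\widehat{\mathbf{c}}/(t+1) \bmod 2$ differs from the Hamming codeword $\mathbf{c}/(t+1) \bmod 2$ in at most one bit — precisely a single substitution in the binary domain. Running the Hamming decoder on $\widehat{\mathbf{c}}/(t+1) \bmod 2$ therefore recovers $\mathbf{c}/(t+1) \bmod 2$, which (together with the known multiples $\widehat{c}_j$, correcting the lone erroneous coordinate back to the other adjacent multiple of $t+1$) reconstructs $\mathbf{c}$ exactly. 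Phrasing this in the language of the paper's definitions, one shows $B_t^{\mathbb{L}}\!\big(B_1^{\mathbb{S}}(\mathbf{c})\big) \cap B_t^{\mathbb{L}}\!\big(B_1^{\mathbb{S}}(\mathbf{c}')\big) = \emptyset$ for distinct codewords $\mathbf{c},\mathbf{c}'$: if a common received matrix existed, the above procedure would yield both $\mathbf{c}$ and $\mathbf{c}'$, a contradiction.

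The main obstacle is the bookkeeping of the boundary case $r_j = c_j - t - 1$: one must verify that the substitution can be "absorbed" consistently, i.e.\ that the at-most-one coordinate where $\widehat{\mathbf{c}}$ is wrong is genuinely the unique position needing correction, and that no second ambiguity is introduced by a substitution that merely perturbs $r_j$ within the interval without crossing a multiple of $t+1$ (in that case $\widehat{c}_j = c_j$ already, so the binary vector is unaffected there). A minor subtlety is the edge effect when $c_j = 0$ or $c_j = M$ near the alphabet boundary, where the interval for $r_j$ is truncated but the rounding argument still isolates a unique candidate pair; and one should note $t < M$ guarantees $R$ has at least one row so column sums are well-defined. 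Once these cases are checked, the sequential decoder is manifestly correct and the theorem follows.
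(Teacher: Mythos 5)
Your overall strategy is the same as the paper's: undo the strand losses by rounding each column sum up to the nearest multiple of $t+1$, then run the Hamming decoder on the resulting vector divided by $t+1$ and reduced mod $2$ to locate and repair the one column where the substitution pushed the count across a multiple. However, your case analysis has a genuine gap. You claim that $\widehat{c}_j = (t+1)\lceil r_j/(t+1)\rceil$ equals $c_j$ unless $r_j \le c_j - t - 1$. This is false: if none of the $t$ lost rows carried a $1$ in column $h$ and the substitution flips a $0$ to a $1$ there, then $r_h = c_h + 1$ and rounding up gives $\widehat{c}_h = c_h + (t+1)$, an \emph{overshoot}. Your argument only accounts for the undershoot $\widehat{c}_h = c_h - (t+1)$ arising from $r_h = c_h - t - 1$. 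In the overshoot case the Hamming decoder still flags position $h$ (the parity of $\widehat{c}_h/(t+1)$ is flipped), but your final step --- ``correcting the lone erroneous coordinate back to the other adjacent multiple of $t+1$'' --- is now ambiguous: both $\widehat{c}_h + (t+1)$ and $\widehat{c}_h - (t+1)$ flip the parity, and the reading of your sentence consistent with your (incomplete) case analysis is ``add $t+1$,'' which in the overshoot case outputs $c_h + 2(t+1)$ and decodes incorrectly.

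The missing piece is a rule for choosing the direction of the $\pm(t+1)$ correction at the flagged position $h$. The paper's proof treats the two crossings as separate cases and disambiguates them using information you discard: whether the rounding step actually changed $r_h$. If $r_h \equiv 0 \pmod{t+1}$ (rounding changed nothing), the only consistent crossing is the undershoot and $c_h = r_h + (t+1)$; if $r_h \equiv 1 \pmod{t+1}$ (rounding added $t$), the only consistent crossing is the overshoot and $c_h = r_h - 1$. Equivalently, one checks which of the two parity-flipping multiples lies in the feasible interval $[r_h - 1,\, r_h + t + 1]$; exactly one does. Adding this disambiguation (and deleting the false ``unless $r_j \le c_j - t - 1$'' claim) repairs your argument and makes it essentially identical to the paper's.
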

\begin{proof}
    We will prove the correctness of this construction by providing an explicit decoding algorithm. 
    Let $\mathbf{c} \in \mathcal{C}_{t,1}^\mathbb{L,S}$ and $C$ its corresponding matrix. Let $R$ be the erroneous version of $C$, and assume it has experienced $t$ strand losses and one substitution. By summing each column of $R$, we get the vector $\mathbf{r}$. Assume that the symbol $s$ was substituted in column $h$. First, we correct the strand loss errors and ignore the substitution. Since we know from \autoref{sec:strandloss} that $c_j - t \leq r_j \leq c_j$, we can receive the vector $\mathbf{y}$ such that $y_j = r_j + (- r_j \! \! \mod{t+1})$. This would be the codeword $\mathbf{c}$ if no substitution would have appeared. Therefore, we notice that $y_j = c_j$ for all $j \neq h$.
    \\
    Next, let us consider the column $h$. We have $r_h \in \left\{ c_h - t - 1, c_h - t, \ldots , c_h , c_h +1  \right\}$. If $r_h \in \left\{ c_h - t, \ldots , c_h \right\}$, we will get $y_h = c_h$ and there is no error when applying the decoder of the Hamming code. Hence, we only need to consider the other two cases. 
    \\
    Case 1 ($r_h = y_h = c_h - t - 1$):
    After decoding the strand losses, we receive $y_h = c_h - t - 1$ and furthermore $\frac{y_h}{t+1} = \frac{c_h}{t+1} - 1$. Thus, the Hamming decoder will recognize an error at position $h$.  
    Since we did not change position $h$ while correcting the strand losses ($y_h = r_h$), we know that $c_h = r_h + t +1$. 
    \\
    Case 2 ($r_h = y_h - t = c_h +1 $):
    After decoding the strand losses, we receive $y_h = c_h + t + 1$ and furthermore $\frac{y_h}{t+1} = \frac{c_h}{t+1} + 1$ Thus, again the Hamming decoder will recognize an error at position $h$. 
    Since we did change position $h$ while correcting the strand losses ($y_h = r_h + t$), we know that $c_h = r_h - 1$.
    \\
    This covers all possible cases for $t$ strand loss and one substitution error and the theorem is proven.
\end{proof}

\section{Conclusion}

In this work we presented a new approach to model synthesis and error corrections for composite DNA. We showed the equivalence to $t$ substitution errors with codes in the $L_1$ metric such that results of existing codes can be applied. 
Furthermore, we analysed how losses of $t$ strands are equivalent to codes in the $L_\infty$ metric. This allowed us to present a tight upper bound and present a perfect code construction. Additionally, we utilized the characteristics of the model to introduce deletion errors. An upper bound is derived for $t$-deletion correcting codes. A code construction is presented which is able to correct a single deletion and asymptotically meets the upper bound if the number of strands is odd. Finally, we mixed the error types of strand losses with substitution and presented a code construction. In future work, the model can be extended to multiple deletions and further combination of error types. 

\printbibliography

\enlargethispage{-1.2cm} 
\newpage

 \section*{Appendix}
 \vspace{-0.15cm}
\subsection{Further definitions}
\defInsertion
\defIndel

\subsection{Proof of \autoref{cla:l1-equiv}}
\proClaSubstitution

\subsection{Proof of \autoref{cla:linfty}}
\proClaStrandLoss

\subsection{Proof of \autoref{the:strand-loss-code}}
\proTheStrandLossCode

\end{document}